\documentclass[11pt]{article}
\usepackage{amsmath}
\usepackage{graphicx}
\usepackage{enumerate}
\usepackage{natbib}
\usepackage{url}
\usepackage{amsfonts,amssymb}
\usepackage{bm}
\usepackage{amsthm}
\usepackage{float}
\usepackage{setspace}
\usepackage{caption}
\usepackage{algpseudocode}
\usepackage{algorithm}
\usepackage{algorithmicx}
\usepackage{booktabs}
\usepackage{longtable}
\usepackage{array} 
\usepackage{multirow}
\usepackage{siunitx}
\usepackage{makecell}

\usepackage[colorlinks]{hyperref}
\hypersetup{
    colorlinks,
    linkcolor={blue},
    filecolor={blue}, 
    urlcolor={blue},
    citecolor={blue}
}
\usepackage[capitalise,noabbrev]{cleveref}

\usepackage{tikz}           
\usetikzlibrary{
    fit,
    decorations.pathmorphing, 
    positioning,              
    arrows,                  
    decorations.pathreplacing,
    calc                
}

\usepackage{natbib}

\newtheorem{thm}{Theorem}
\newtheorem{lemma}{Lemma} 
\newtheorem{definition}{Definition}

\newtheorem{assumption}{Assumption}
\newtheorem{condition}{Condition}
\newtheorem{corollary}{Corollary}

\newcommand{\tR}{\mathbb{R}}

\newcommand{\cP}{\mathcal{P}}

\newcommand{\cN}{\mathcal{N}}

\newcommand{\cS}{\mathcal{S}}

\newcommand{\bS}{\mathbf{S}}

\newcommand{\btau}{\bm{\tau}}

\newcommand{\bSigma}{\bm{\Sigma}}

\newcommand{\bE}{\mathbf{E}}

\newcommand{\bA}{\mathbf{A}}

\newcommand{\bO}{\mathbf{O}}

\newcommand{\cO}{\mathcal{O}}

\newcommand{\bU}{\mathbf{U}}

\newcommand{\bD}{\mathbf{D}}

\newcommand{\bX}{\mathbf{X}}

\newcommand{\bY}{\mathbf{Y}}

\newcommand{\bw}{\mathbf{w}}
\newcommand{\bW}{\mathbf{W}}
\newcommand{\bH}{\mathbf{H}}

\newcommand{\bz}{\bm{z}}

\newcommand{\bzero}{\bm{0}}
\newcommand{\bone}{\bm{1}}

\newcommand{\cA}{\mathcal{A}}

\usepackage{xcolor}

\newcommand{\redcol}{\textcolor{red}}

\newcommand{\tE}{\mathbb{E}}
\newcommand{\tP}{\mathbb{P}}

\newcommand{\hbY}{\hat{\mathbf{Y}}}
\newcommand{\hY}{\hat{Y}}

\newtheorem{example}{Example}

\begin{document}

\def\spacingset#1{\renewcommand{\baselinestretch}%
{#1}\small\normalsize} \spacingset{1}

\title{\bf Causal Inference for Preprocessed Outcomes with an Application to Functional Connectivity}
\author{Zihang Wang, Razieh Nabi, Benjamin B. Risk\\
    Department of Biostatistics and Bioinformatics, Emory University}
\maketitle

\bigskip

\begin{abstract}
In biomedical research, repeated measurements within each subject are often processed to remove artifacts and unwanted sources of variation. The resulting data are used to construct derived outcomes that act as proxies for scientific outcomes that are not directly observable. Although intra-subject processing is widely used, its impact on inter-subject statistical inference has not been systematically studied, and a principled framework for causal analysis in this setting is lacking. In this article, we propose a semiparametric framework for causal inference with derived outcomes obtained after intra-subject processing. This framework applies to settings with a modular structure, where intra-subject analyses are conducted independently across subjects and are followed by inter-subject analyses based on parameters from the intra-subject stage. We develop multiply robust estimators of causal parameters under rate conditions on both intra-subject and inter-subject models, which allows the use of flexible machine learning. We specialize the framework to a mediation setting and focus on the natural direct effect. For high dimensional inference, we employ a step-down procedure that controls the exceedance rate of the false discovery proportion. Simulation studies demonstrate the superior performance of the proposed approach. We apply our method to estimate the impact of stimulant medication on brain connectivity in children with autism spectrum disorder.

\end{abstract}

\noindent%
{\it Keywords:}  Semiparametric inference, Derived outcomes, Multiply robust estimation, Resting-state fMRI, Autism spectrum disorder
\vfill

\newpage
\spacingset{1.9}

\section{Introduction}
\label{sec:intro}

Causal inference provides a statistical foundation for studying the impact of treatments, exposures, or biological conditions on individuals in observational biomedical research \citep{imbens2015causal}. Advancements in data collection technologies now allow the collection of many observations within each subject. For example, in functional magnetic resonance imaging (fMRI), hundreds images of the blood oxygen level–dependent (BOLD) signal are collected over time for each participant. These intra-subject measurements contain information about subject-level outcomes of interest. In practice, intra-subject processing is typically necessary prior to inter-subject analysis. Such processing aims to remove artifacts and unwanted sources of variation. Common examples include nuisance regression of motion alignment parameters in fMRI analysis, regression of eye blink and other artifacts in electroencephalography analysis, and neuropil or global signal regression in calcium imaging analysis \citep{power2014methods,di2016new,stringer2019computational}. Existing intra-subject processing pipelines vary widely across studies. However, theoretical guarantees when using intra-subject processing are lacking. Additionally, modern machine learning can improve intra-subject processing \citep{chaudhary2022fast,zhang2024motion}. The impact of adaptive learning in this setting remains largely unexplored.

Our motivating example is to study stimulant medication effects on brain functional connectivity (FC) in school-aged children. Resting-state fMRI (rs-fMRI) data are collected while children are not performing any specific tasks. FC is defined as statistical dependencies between brain regions, and it is commonly measured using Pearson correlation \citep{power2014methods,ciric2017benchmarking}. The intra-subject measurements include rs-fMRI signals and time courses of head movement alignment parameters. Intra-subject processing involves linear regression of movement alignment parameters on rs-fMRI data within each subject to remove motion artifacts. Due to non-linear and complex motion artifacts, scrubbing is widely used in which volumes with excessive motion are removed \citep{cox1996afni,power2014methods,ciric2017benchmarking}. In clinical cohorts, this process can remove more than 50\% of intra-subject data \citep{yan2013comprehensive}. In addition, subjects with limited usable data are often excluded. Subject removal may introduce selection bias because children who move more tend to have more behavioral problems \citep{cosgrove2022limits,nebel2022accounting}. A recent large study of brain-behavior associations in children excluded more than 60\% of participants \citep{marek2022reproducible}. 
Modern machine learning methods for denoising neuroimaging data can effectively model relationships among intra-subject measurements and may help retain more usable data \citep{chaudhary2022fast,manzano2024denoising}. However, their application to nuisance regression in rs-fMRI remains limited. A framework for valid causal inference with conditions on intra-subject processing is still lacking.

Robust estimators developed within semiparametric inference theory offer a promising solution to allow flexible machine learning models to be used in causal inference \citep{robins1994estimation,scharfstein1999adjusting,van2011targeted}. For example, the doubly robust estimators for the average treatment effect (ATE) are consistent if either the propensity score model or the outcome regression model is consistently estimated \citep{cao2009improving}. Under additional regularity conditions ensuring that second-order remainder terms are asymptotically negligible, these estimators admit an asymptotically linear expansion with the efficient influence function and attain the semiparametric efficiency bound. In particular, this holds when products of propensity score and outcome regression estimation errors converge at a rate faster than $n^{-1/2}$, allowing both nuisance components to be estimated flexibly at rates slower than $n^{-1/2}$ individually. This property substantially reduces sensitivity to model misspecification compared with fully parametric approaches \citep{laan2003unified,tsiatis2006semiparametric}. However, despite their usefulness for efficient causal modeling with flexible nuisance estimation, existing methods do not directly apply to the setting studied here, where outcomes are estimated from intra-subject processing. In this case, additional estimation error is introduced prior to the inter-subject analysis, and standard doubly robust theory does not account for the impact of this intra-subject processing on asymptotic linearity or efficiency. This motivates the development of a framework that allows data-driven intra-subject processing while preserving robust inter-subject estimation within a semiparametric causal inference paradigm.

The unobservable nature of the outcomes places the problem within the derived outcome framework. \cite{qiu2023unveiling} proposed the derived outcome framework that explicitly allows outcomes to be unobservable. Identification and estimation of ATE are established by using an outcome estimator as a bridge, under a modified ignorability condition and a regularity condition on the accuracy of outcome estimates. Their inverse probability weighting (IPW) estimators rely on correctly specified propensity score models and can be sensitive to model misspecification. To improve robustness, \citet{du2025causal} introduced doubly robust estimators into the derived outcome framework and specialized their approach to standardized average treatment effects and quantile treatment effects in genomic studies. However, these frameworks assume that variability in the derived outcome arises only from outcome estimation based on intra-subject observations. In practice, these observations are often residuals obtained after intra-subject processing. This additional processing step introduces another source of uncertainty, since the relationship among the raw intra-subject measurements must also be estimated.

Motivated by the limitations described above, we propose a semiparametric inference framework under a hierarchical setting for causal analysis with derived outcomes constructed from intra-subject processed measurements. As far as we know, this is the first work to study causal inference after intra-subject processing within a theoretical framework. We define causal estimands under a mediation structure, which is biologically meaningful in our motivating rs-fMRI study. We develop multiply robust and efficient estimators for the proposed estimands by integrating semiparametric theory for causal mediation analysis into our intra-subject processed derived outcome framework \citep{tchetgen2012semiparametric,vanderweele2015explanation}. Both intra-subject processing functions and inter-subject models are allowed to belong to broad semiparametric classes, which supports flexible estimation and reduces sensitivity to model misspecification. We establish asymptotic normality for individual estimators and adopt a step-down multiple testing procedure from \cite{qiu2023unveiling} and \cite{du2025causal}. An application to resting-state fMRI data from children with autism spectrum disorder (ASD) demonstrates the practical utility of the proposed framework. We use flexible machine learning algorithms to remove motion artifacts, which removes the need for data scrubbing and participant removal. We then estimate the direct effect of stimulant medication on brain FC.

The remainder of the paper is organized as follows. Section~\ref{sec:causal_intra} presents the hierarchical model incorporating both intra-subject and inter-subject variables and states the assumptions required for identification of the causal estimands. Section~\ref{sec:multi_robust} introduces a multiply robust estimator and examines its statistical properties. Section~\ref{sec:simultaneous} establishes asymptotic normality of the estimator and describes a multiple testing procedure for controlling the false discovery proportion. Section~\ref{sec:simulations} presents simulation studies comparing the proposed approach with existing methods. Section~\ref{sec:application} applies the framework to rs-fMRI data from children with autism spectrum disorder to assess stimulant medication effects on FC. Section~\ref{sec:discussion} discusses limitations and directions for future research. Details of derivations and technical proofs are provided in the Web Supplement Section A.

\section{Causal Inference with Intra-subject Processing}
\label{sec:causal_intra}

We present a general framework for causal analysis with derived outcomes that account for intra-subject processing. The framework is developed in the context of causal mediation analysis (Panel A in \cref{fig:brain_diagrams}) and reduces to the ATE in the absence of a mediator. We then illustrate the framework with a specific application to functional connectivity studies (Panel B in \cref{fig:brain_diagrams}). Let $\bone_p \in \tR^p$ denote a vector with all components equal to $1$, and let $\bzero_p \in \tR^p$ denote the zero vector. Let $\bO_n = (O_1, \ldots, O_n)$ be an independent and identically distributed sample drawn from $\tP \in \cP$, where $\cP$ is the statistical model. For a measurable function $f$, we write $\tP [f(O)] = \tE [f(O)] = \int f(o) d \tP(o)$, and let $\hat{\tP}$ denote an estimator of $\tP$. The empirical distribution is denoted by $\tP_n$, so that $\tP_n [f(O)] = \int f(o) d \tP_n(o) = \frac{1}{n} \sum_{i=1}^n f(O_i)$. We define the $L_p$ norm of $f$ as $\| f \|_{p} = \{ \int f^p(o) d \tP(o) \}^{1/p}$.

\subsection{The Hierarchical Model with Intra-subject Processing}
\label{sec:hie_model}

\begin{figure}[t]
\scalebox{0.4}{
    \begin{tikzpicture}[>=stealth, node distance=1cm]
        \tikzstyle{format} = [thick, circle, minimum size=3.0mm, inner sep=2pt]
        \tikzstyle{square} = [draw, thick, minimum size=4.5mm, inner sep=2pt]

        \begin{scope}[xshift=0cm, yshift=0cm]
            \path[->, thick, line width=0.07cm]
            
            node[] (a) {\scalebox{1.5}{$\underset{\text{\large (Treatment)}}{A}$}}
            node[above of=a, xshift=-1.5cm, yshift=6cm] (Alabel) {\huge \textbf{Panel A}}
            
            node[right of=a, xshift=6cm] (ui) {\scalebox{1.5}{$\underset{ \substack{\text{\large \redcol{(Indirect}} \\ \text{\large \redcol{Latent Factor)}}}}{\redcol{\bU_I}}$}}
            
            node[right of=ui, xshift=7cm] (m) {\scalebox{1.5}{$\underset{\text{\large (Mediator)}}{M}$}}
            
            node[right of=m, xshift=8cm, yshift=0cm] (xtht) {\scalebox{1.5}{$\underset{\text{\large (Response, Nuisance)}}{\bX_T, \bH_T}$}}

            node[above of=ui, xshift=0cm, yshift=4cm] (w) {\scalebox{1.5}{$\underset{\text{\large (Confounders)}}{\bW}$}}
            
            node[above of=m, xshift=0cm, yshift=4cm] (ud) {\scalebox{1.5}{$\underset{\substack{\text{\large \redcol{(Direct}} \\ \text{\large \redcol{Latent Factor)}}}}{\redcol{\bU_D}}$}}
            
            node[right of=ud, xshift=8cm, yshift=0cm] (x1h1) {\scalebox{1.5}{$\underset{\text{\large (Response, Nuisance)}}{\bX_1, \bH_1}$}}
            
            node[right of=xtht, xshift=6cm, yshift=2.5cm] (y) {\scalebox{1.5}{$\underset{\text{\large (Outcome)}}{\bY}$}}

            node[above of=xtht, xshift=0cm, yshift=2cm] (dots) {\scalebox{3}{$\vdots$}}
            
            (w) edge[black, bend left=0] (a) 
            (w) edge[black, bend left=0] (ui)
            (w) edge[black, bend left=0] (ud)
            (w) edge[black, bend right=0] (m)
            (w) edge[black, bend left=20] (x1h1)
            (w) edge[black] (xtht)
            
            (a) edge[black] (ui) 
            (a) edge[black, bend left=45] (ud) 
            
            (ui) edge[black] (m) 
            (m) edge[black] (x1h1)
            (m) edge[black] (xtht)
            
            (ud) edge[black] (x1h1) 
            (ud) edge[black] (xtht) 

            (x1h1) edge[black] (y) 
            (xtht) edge[black] (y) 
             
            ;
            
        \end{scope}

        \begin{scope}[xshift=0cm, yshift=-10.5cm]
            \path[->, thick, line width=0.07cm]
            
            node[] (a) {\scalebox{1.5}{$\underset{\text{\large (Treatment)}}{A}$}}
            node[above of=a, xshift=-1.5cm, yshift=6cm] (Alabel) {\huge \textbf{Panel B}} 
            
            node[right of=a, xshift=5cm] (um) {\scalebox{1.5}{$\underset{ \substack{\text{\large \redcol{(Motion-related}} \\ \text{\large \redcol{Brain Wiring)}}}}{\redcol{\bU_M}}$}}
            
            node[right of=um, xshift=6cm] (sm) {\scalebox{1.5}{$\underset{ \substack{\text{\large \redcol{(Motion}} \\ \text{\large \redcol{Neural activity)}}}}{ \redcol{\bS_{M}}}$}}
            
            node[above of=sm, xshift=0cm, yshift=4.5cm] (ub) {\scalebox{1.5}{$\underset{\substack{\text{\large \redcol{(Non-motion}} \\ \text{\large \redcol{Brain Wiring)}}}}{ \redcol{\bU_B}}$}}

            node[right of=ub, xshift=5cm, yshift=-2.5cm] (sb) {\scalebox{1.5}{$\underset{\substack{\text{\large \redcol{(Non-motion}} \\ \text{\large \redcol{Neural activity)}}}}{ \redcol{\bS_{B}}}$}}
            
            node[right of=ub, xshift=13cm, yshift=0cm] (x1h1) {\scalebox{1.5}{$\underset{\substack{\text{\large (fMRI Signal,} \\ \text{\large  Transient Movement)}}}{\bX_1, \bH_1}$}}

            node[right of=sm, xshift=13cm, yshift=0cm] (xtht) {\scalebox{1.5}{$\underset{\substack{\text{\large (fMRI Signal,} \\ \text{\large  Transient Movement)}}}{\bX_T, \bH_T}$}}
            
            node[right of=xtht, xshift=6cm, yshift=2.5cm] (y) {\scalebox{1.5}{$\underset{\substack{\text{\large (Functional} \\ \text{\large Connectivity)}}}{\bY}$}}
            
            node[above of=um, xshift=0cm, yshift=4.5cm] (w) {\scalebox{1.5}{$\underset{\text{\large (Confounders)}}{\bW}$}}
            
            node[right of=sm, xshift=5cm, yshift=0cm] (m) {\scalebox{1.5}{$\underset{\substack{\text{\large (Motion} \\ \text{\large Trait)}}}{M}$}}

            node[above of=xtht, xshift=0cm, yshift=2.3cm] (dots) {\scalebox{3}{$\vdots$}}
            
            (w) edge[black, bend left=0] (a) 
            (w) edge[black, bend left=0] (um)
            (w) edge[black, bend left=0] (ub)
            (w) edge[black, bend right=0] (sb)
            (w) edge[black, bend right=0] (sm)
            
            (a) edge[black] (um) 
            (a) edge[black, bend left=50] (ub) 
            
            (um) edge[black] (sm) 
            (ub) edge[black] (sb)
            
            (sm) edge[black] (m) 
            (m) edge[black] (x1h1) 
            (m) edge[black] (xtht) 

            (sb) edge[black] (x1h1) 
            (sb) edge[black] (xtht) 

            (x1h1) edge[black] (y) 
            (xtht) edge[black] (y) 
             
            ;
            
        \end{scope}
        
    \end{tikzpicture}
    }
\caption{\textbf{Panel A}: The general diagram of causal mediation analysis with derived outcomes in the presence of intra-subject nuisance. Treatment is denoted by $A \in \{0,1\}$. $\bU_D \in \tR^{l_D}$ and $\bU_I \in \tR^{l_I}$ are latent factors along the direct and indirect treatment–outcome pathways respectively. The observed mediator $M \in \tR$ is generated from the indirect latent factor $\bU_I$. Intra-subject repeated measurements of responses $\bX_1,\cdots,\bX_T \in \tR^V$ and nuisance $\bH_1,\cdots,\bH_T \in \tR^p$ are influenced by both $\bU_D$ and $M$. The vector of derived outcomes $\bY \in \tR^J$ is constructed from $\bX$ and $\bH$. \textbf{Panel B}: The diagram for causal inference in functional connectivity studies. $\bU_M \in \tR^{l_M}$ and $\bU_B \in \tR^{l_B}$ are latent factors representing motion-related and non-motion-related brain wiring, respectively. Non-motion neural activity $\bS_{B} \in \tR^{l_{sb}}$ is generated by $\bU_B$, while motion-related activity $\bS_{M} \in \tR^{l_{sm}}$ is generated by $\bU_M$. Both $\bS_{B}$ and $\bS_{M}$ are unobserved. $\bS_{M}$ further generates the observed motion trait $M$, often represented as mean framewise displacement. Both $\bS_{B}$ and $M$ influence the fMRI response $\bX_t$ and transient movement variables $\bH_t$ at each scan time $t=1,\cdots,T$. These are then used to construct the $J$-dimensional derived outcomes of functional connectivity $\bY$.
}
\label{fig:brain_diagrams}
\end{figure}
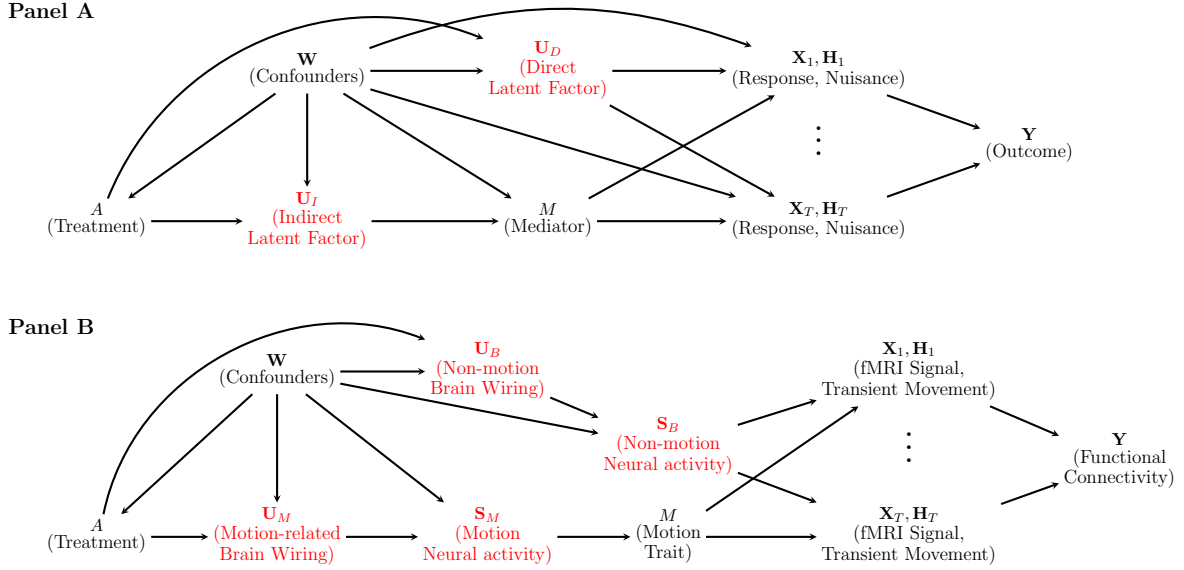

For the Panel A in \cref{fig:brain_diagrams}, consider a random sample of $n$ subjects with variables $\{ A, \bW, \bU_{I}(a), \bU_{D}(a), M(a), \bX(a,m), \bH(a,m) \}$. For each subject, let $A \in \{0,1\}$ denote the treatment indicator, where $A = 1$ indicates treatment receipt and $A = 0$ otherwise. Let $\bW \in \tR^q$ denote a vector of confounders that may jointly influence all variables in the diagram. We define $\bU_I(a) \in \tR^{l_I}$ and $\bU_D(a) \in \tR^{l_D}$ as latent factors corresponding to the indirect and direct causal pathways, respectively. The mediator $M(a) \in \tR$ is generated from the indirect latent factor $\bU_I(a)$ and together with the direct latent factor $\bU_D(a)$ determines the generating distribution of the intra-subject measurements, including the responses $\bX(a,m) = [\bX_1(a,m), \ldots, \bX_T(a,m)] \in \tR^{V \times T}$ and the nuisance variables $\bH(a,m) = [\bH_1(a,m), \ldots, \bH_T(a,m)] \in \tR^{p \times T}$, where $T$ is the number of repeated measurements on the subject. The vector of subject-level potential outcomes of interest is $\bY(a,m) \in \tR^J$, defined as a functional that maps the support of the responses $\bX_t(a,m)$ and nuisances $\bH_t(a,m)$, conditional on the latent factors $\bU_D(a)$, $\bU_I(a)$, and the confounders $\bW$, to a $J$-dimensional space, as shown in the following \cref{def:Y_def}.

\begin{definition}\label{def:Y_def}
(Outcome with intra-subject processing).
The subject-level potential outcome $\bY(a,m) \in \tR^J$ is defined through intra-subject processing of the random variables $\bX_t(a,m) \in \tR^V$ and $\bH_t(a,m) \in \tR^p$ as
\begin{equation}
\bY(a,m) = g\left[ \bX_t(a,m) - \textbf{f}(\bH_t(a,m)) \mid \bU_D(a), \bU_I(a), \bW \right],
\end{equation}
where $\textbf{f} = \{ f_v: \tR^p \to \tR;\ v = 1, \ldots, V \}$ denotes the collection of intra-subject processing functions, and $g: \tR^V \to \tR^J$ is a pre-specified outcome function mapping intra-subject processed residuals to subject-level potential outcomes.
\end{definition}

Panel B in \cref{fig:brain_diagrams} illustrates the diagram for a FC study, which represents the complex biological processes underlying brain activity. The treatment $A$ influences $\bU_B(a) \in \tR^{l_B}$, a latent factor that captures non-motion related neuronal wiring of the brain, which in turn determines the corresponding non-motion time-varying neural activity $\bS_B(a) \in \tR^{l_{sb}}$. Both $\bU_B(a)$ and $\bS_B(a)$ are latent factors. The treatment $A$ may also affect neuronal wiring related to head motion during scanning, represented by $\bU_M(a) \in \tR^{l_M}$, which subsequently governs motion related neural activity $\bS_M(a) \in \tR^{l_{sm}}$, and both $\bU_M(a)$ and $\bS_M(a)$ are also latent factors. The mediator $M(a)$ represents the subject’s motion trait and is commonly measured by mean framewise displacement (FD). In our data application, there are impacts of $M$ on the correlations between brain locations composing $\bX_t$ even after removing the estimated contributions of $f\{\bH_t(a,m)\}$, which motivates the decomposition of natural direct and indirect effects. The non-motion neural activity $\bS_B(a)$ and the motion level $M(a)$ jointly influence the potential responses $\bX(a,m) \in \tR^{V \times T}$, corresponding to intra-subject fMRI BOLD signals across $V$ brain regions and $T$ functional scans, as well as the potential nuisances $\bH(a,m) \in \tR^{p \times T}$, which contain $p$ movement alignment parameters across length $T$. In addition, subject-level confounders $\bW = (W_1, \dots, W_q)^\top \in \tR^q$, such as age, sex, handedness, and intelligence quotient, may affect all variables in the diagram. The FC example of the outcome with intra-subject processing defined in Definition~\ref{def:Y_def} is shown in the following Example~\ref{example1}. 

\begin{example}\label{example1}
(Functional Connectivity with Intra-subject Processing) The subject-level outcome $\bY_{FC}(a,m) \in \tR^{J}$ is defined as the vectorized Fisher z-transformed Pearson correlations computed from the residuals of $\bX_t(a,m)$ after intra-subject nuisance regression on $\bH_t(a,m)$:
\begin{equation}\label{eq:FC_def}
\bY_{FC}(a,m) = \text{atanh}\big[ \text{Corr}\left\{ \bX_t(a,m) - \mathbf{f}_{r}(\bH_t(a,m)) \mid \bU_M(a), \bS_M(a), \bU_B(a), \bS_B(a), \bW \right\} \big],
\end{equation}
where $\bm{f}_{r}$ denotes the intra-subject nuisance regression functions that characterize transient in-scanner motion artifact. The corresponding outcome function is given by $g(\cdot) = \text{atanh}[\text{Corr}\{\cdot\}]$. 
\end{example}

The motion related latent factors $\bU_M(a)$ and $\bS_M(a)$ are included in the class of indirect latent factors $\bU_I(a)$, while the brain-related latent factors $\bU_B(a)$ and $\bS_B(a)$ belong to the class of direct latent factors $\bU_D(a)$. For notational simplicity, we use $\bU_D(a)$ and $\bU_I(a)$ throughout this paper.

The average treatment effect (ATE) is defined as $\btau^{\textrm{ATE}} = \tE[\bY(1, M(1)) - \bY(0, M(0))]$, and represents the overall causal effect of the treatment, with the mediator taking its natural value under each treatment level. Our primary estimand is the natural direct effect (NDE), which characterizes the component of the treatment effect on the outcome $\bY$ that operates independently of the mediator $M$. For $a, a' \in \{0,1\}$, we define $\bY(a,a') = \bY(a, M(a'))$. The NDE is then given by
\begin{equation}\label{eq:def_NDE}
\bm{\tau}^{\text{NDE}} = \tE[\bY(1, M(0)) - \bY(0, M(0))],
\end{equation}
which contrasts potential outcomes under $A = 1$ and $A = 0$ while fixing the mediator at its natural value under $A = 0$. The natural indirect effect (NIE) is defined as $\bm{\tau}^{\text{NIE}} = \tE[\bY(1, M(1)) - \bY(1, M(0))]$, and represents the component of the treatment effect transmitted through the mediator $M$. For notational convenience, let $\bm{\psi}(a,a') = \tE[\bY(a, M(a'))]$. With this notation, the NDE, NIE, and ATE can all be expressed as contrasts of $\bm{\psi}(a,a')$ evaluated at appropriate values of $a$ and $a'$.

\subsection{Assumptions and Identification}
\label{sec:intra_identify}
Since the outcome of interest $\bY$ is not directly observable, existing derived outcome frameworks impose consistency assumptions on the full set of intra-subject observations $\bX$ \citep{qiu2023unveiling,du2025causal}. We extend this requirement to the complete intra-subject measurements, including both $\bX(a,m)$ and $\bH(a,m)$, in settings with intra-subject processing and a mediation structure.

\begin{assumption}\label{level2_consistency}
(Consistency). 
(i) $\bX(a,m) = \bX$ and $\bH(a,m) = \bH$ when $A = a$ and $M = m$, for $a \in \{0,1\}$ and all $m$;
(ii) $M(a) = M$ when $A = a$.
\end{assumption}

Since $\bY(a,m)$ is a deterministic function of $\bX(a,m)$ and $\bH(a,m)$, it follows that $\bY = \bY(a,m)$ whenever $A = a$ and $M = m$. \cref{level2_consistency} corresponds to the stable unit treatment value assumption (SUTVA), which ensures the absence of interference across subjects and guarantees that the treatment is well defined for all individuals.

\begin{assumption}\label{level2_positivity}
(Positivity).  
For all $m$ and $a \in \{0,1\}$, if $\tP(\bW = \bw) > 0$, then $0 < \tP(A = a \mid \bW = \bw) < 1$ and $0 < \tP(M = m \mid A = a, \bW = \bw) < 1$.
\end{assumption}

\begin{assumption}\label{level2_cond_ran}
(Conditional randomization). For $a \in \{ 0, 1 \}$, and all $m$,
(i) $\{\bX(a,m), \bH(a,m)\} \perp A \mid \bW$;
(ii) $\{ \bX(a,m), \bH(a,m) \} \perp M \mid A, \bW$;
(iii) $M(a) \perp A \mid \bW$.
\end{assumption}

\cref{level2_positivity} guarantees that each subject has a positive probability of receiving every level of the treatment and the mediator, which ensures that the proposed estimators are well defined. \cref{level2_cond_ran} asserts the absence of unmeasured confounding between the intra-subject measurements $\{\bX, \bH\}$ and the treatment $A$, between $\{\bX, \bH\}$ and the mediator $M$, and between the mediator $M$ and the treatment $A$. To identify the $\bm{\psi}(a,a')$, we further impose the following cross-world assumption.

\begin{assumption}\label{level2_cross_world}
(Cross-world independence). $\{ \bX(a,m), \bH(a,m) \} \perp M(a') \mid \bW$, for all $m$ and $a, a' \in \{0,1\}$.
\end{assumption}

The cross-world \cref{level2_cross_world} ensures that, conditional on the confounders $\bW$, the value of the mediator under $A = a'$ provides no information about the potential intra-subject measurements under $A = a$. Unlike classical causal inference settings, the proposed framework for intra-subject processed derived outcomes imposes conditional independence and cross-world assumptions on the full set of intra-subject response and nuisance measurements, rather than directly on the subject-level outcome \citep{tchetgen2012semiparametric}. We next introduce the derived outcome $\hbY(a,m)$, which is constructed as a function of observed variables with intra-subject processing.
\begin{definition}\label{def:Yhat_def}
(Derived outcome with intra-subject processing).
The derived outcome $\hbY(a,m)$ of $\bY(a,m)$ with intra-subject processing takes the form
\begin{equation}
\hbY(a,m) = \hat{g} \left\{ \bX(a,m) - \hat{\mathbf{f}}(\bH(a,m)) \right\},
\end{equation}
where $\hat{\mathbf{f}} = \{ \hat{f}_v: \tR^{p \times T} \to \tR^T, v = 1, \ldots, V \}$ denotes the fitted intra-subject processing functions, and $\hat{g}: \tR^{V \times T} \to \tR^J$ is an estimator of the pre-specified outcome function $g$.
\end{definition}

Note that the derived outcome $\hbY(a,m)$ in \cref{def:Yhat_def} contains two components of approximation. The function $\hat{g}$ approximates the known outcome function $g$, which typically involves a population expectation. Its form is specified in advance and is usually taken as the sample analogue of $g$, rather than being estimated in a data-adaptive manner. In contrast, the fitted $\hat{\mathbf{f}}$ estimates the unknown intra-subject processing function $\mathbf{f}$, which belongs to a broad semiparametric class and must be learned from intra-subject measurements. Taken together, the intra-subject processed derived outcome $\hbY(a,m)$ serves as a proxy for the potential outcome defined in \cref{def:Y_def}. For Example~\ref{example1} in our motivating rs-fMRI study, the derived FC outcome takes the form
\begin{equation}\label{eq:derived_FC_def}
\hbY_{FC}(a,m) = \text{atanh}\big[ \widehat{\text{Corr}}\left\{\bX(a,m) - \hat{\mathbf{f}}_{r}(\bH(a,m)) \right\} \big],
\end{equation}
where $\hat{\mathbf{f}}_{r}$ denotes the fitted nuisance regression models, and $\hat{g}(\cdot) = \text{atanh}[\widehat{\text{Corr}}\{\cdot\}]$ is a fixed estimator. 

We quantify the discrepancy between the derived and true outcomes as
\begin{equation}\label{eq:Delta_define}
\bm{\Delta}_T(a,m) = \tE[\hbY(a,m) \mid \bU_D(a), \bU_I(a), \bW] - \bY(a,m),
\end{equation}
where $\bm{\Delta}_T(a,m) = \{ \Delta_{jT}(a,m); 1 \le j \le J \}$ denotes the element-wise bias of the derived outcomes. The magnitude of this bias depends on the length of the intra-subject measurements $T$. For notational convenience, define $\hbY(a,a') = \hbY(a, M(a'))$ and $\bm{\Delta}(a,a') = \bm{\Delta}(a, M(a'))$ for $a,a' \in \{0,1\}$. We next state the asymptotic unbiasedness condition for identification.

\begin{assumption}\label{condition1_asy_unbias}
(Asymptotic unbiasedness).
The derived outcome $\hbY(a,a')$ is asymptotically unbiased for $\bY(a,a')$ if it satisfies $\underset{1 \le j \le J}{\max} |\tE [ \Delta_{jT}(a,a') ]| = o(1)$ as $T \to \infty$, for $a,a' \in \{0,1\}$. 
\end{assumption}

When $\hbY(a,a')$ satisfies the asymptotic unbiasedness condition in \cref{condition1_asy_unbias}, the identification of $\bm{\psi}(a,a')$ can be established as stated in the following \cref{thm:identification_NDE}.

\begin{thm}\label{thm:identification_NDE}
(Identification).
Suppose there exists an estimator $\hbY(a,a')$ that satisfies \cref{condition1_asy_unbias}. Under Assumptions \labelcref{level2_consistency,level2_cond_ran,level2_positivity,level2_cross_world}, the counterfactual parameter $\bm{\psi}(a,a')$ is identified as
\begin{equation}
    \tE\left[\tE\left\{ \tE \left( \hbY \mid A = a, M, \bW \right) \mid A = a', \bW\right\}\right].
\end{equation}
\end{thm}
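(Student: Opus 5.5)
The plan is to run the classical mediation-formula argument of \citet{tchetgen2012semiparametric} on the derived outcome $\hbY(a,m)$ in place of the unobservable $\bY(a,m)$, and then use \cref{condition1_asy_unbias} to pass from $\hbY$ back to $\bY$. The identification is therefore understood in the asymptotic sense: $\bm{\psi}(a,a')$ differs from the displayed functional by a term whose maximal coordinate is $o(1)$ as $T\to\infty$.

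\textbf{Step 1: decomposition.} Write
\[
\bm{\psi}(a,a') = \tE[\hbY(a,a')] - \tE[\bm{\Delta}_T(a,a')].
\]
This uses the definition \eqref{eq:Delta_define} together with the tower property over $\{\bU_D(a),\bU_I(a),\bW\}$:
\[
\tE\big[\tE\{\hbY(a,a')\mid \bU_D(a),\bU_I(a),\bW\}\big] = \tE[\hbY(a,a')].
\]
By \cref{condition1_asy_unbias}, $\max_j|\tE[\Delta_{jT}(a,a')]|=o(1)$, so it suffices to identify $\tE[\hbY(a,M(a'))]$ exactly.

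\textbf{Step 2: identifying $\tE[\hbY(a,M(a'))]$.} The key observation is that $\hbY(a,m)=\hat g\{\bX(a,m)-\hat{\mathbf f}(\bH(a,m))\}$ is a fixed measurable function of $(\bX(a,m),\bH(a,m))$. This holds provided $\hat{\mathbf f}$ and $\hat g$ are fitted within subject, for instance via the modular or cross-fitted structure. Consequently, \cref{level2_cond_ran} and \cref{level2_cross_world} transfer from $\{\bX,\bH\}$ to $\hbY(a,m)$. The chain then runs as follows.
\begin{enumerate}[(i)]
\item Condition on $\bW$ and integrate over the law of $M(a')$ given $\bW$, using cross-world independence of $\hbY(a,m)$ and $M(a')$ given $\bW$:
\[
\tE[\hbY(a,M(a'))\mid\bW]=\int \tE[\hbY(a,m)\mid\bW]\,d\tP_{M(a')\mid\bW}(m).
\]
\item Use \cref{level2_cond_ran}(i) to add $A=a$ to the conditioning, then (ii) to add $M=m$, then consistency (\cref{level2_consistency}) to obtain
\[
\tE[\hbY(a,m)\mid\bW]=\tE(\hbY\mid A=a,M=m,\bW).
\]
\item Use \cref{level2_cond_ran}(iii) and consistency to replace $\tP_{M(a')\mid\bW}$ by $\tP_{M\mid A=a',\bW}$.
\end{enumerate}
Positivity (\cref{level2_positivity}) guarantees that each of these conditional expectations is well defined on the relevant support. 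Averaging over $\bW$ then gives exactly the nested expression in the statement.

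\textbf{Main obstacle.} The delicate point is step (ii) together with the claim that independence statements about $\{\bX(a,m),\bH(a,m)\}$ carry over to $\hbY(a,m)$. This requires $\hbY$ to depend only on the subject's own intra-subject data. If $\hat{\mathbf f}$ pooled information across subjects, the functional form would itself be random, and I would instead condition on the fitted $\hat{\mathbf f}$ (treating it as fixed, e.g.\ via sample splitting) before applying the independences. A second subtlety is that \cref{condition1_asy_unbias} only controls the mean of $\bm{\Delta}_T$ at $(a,M(a'))$, but this is precisely the quantity that appears in Step 1, so no further uniformity in $m$ is needed.
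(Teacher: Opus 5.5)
Your proposal is correct and follows essentially the same route as the paper's proof. Both arguments transfer the independence assumptions from $\{\bX(a,m),\bH(a,m)\}$ to $\hbY(a,m)$, run the mediation-formula chain (cross-world independence, conditional randomization, consistency) to equate the nested regression with $\tE[\hbY(a,M(a'))]$, and then use the tower property over $\{\bU_D(a),\bU_I(a),\bW\}$ with the asymptotic unbiasedness assumption to obtain $\bm{\psi}(a,a') + o(1)$. You do the two steps in the opposite order, and you make explicit that the transfer of independences needs $\hat{\mathbf f}$ and $\hat g$ fitted within subject, which the paper leaves implicit in ``By Definition 1 and Definition 2, these assumptions carry over.''
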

The proof of \cref{thm:identification_NDE} is provided in the Web Supplement Section A.1. 

\section{Multiply Robust Estimation}
\label{sec:multi_robust}

Semiparametric theory for $\bm{\psi}(a,a')$ in settings where the outcome $\bY$ is directly observed has been developed in the causal mediation literature \citep{tchetgen2012semiparametric,vanderweele2015explanation}. However, the corresponding one step estimators and influence functions do not directly extend to the present setting, since $\bY$ is unobservable and is constructed from residuals obtained through intra-subject processing. \cref{thm:identification_NDE} shows that $\bm{\psi}(a,a')$ can be identified through the derived outcome $\hbY$. This result motivates the use of $\hbY$ as the basis for robust estimation in our framework.

We define the derived outcome–based target parameter as $\bm{\hat{\psi}}(a,a') = \tE[\hbY(a, M(a'))]$. For its $j$th component, we can construct a one-step estimator $\hat{\psi}_{n,j}(a,a')$ and decompose the estimation error as
\begin{equation}\label{eq:devY_error_decomp}
\hat{\psi}_{n,j}(a,a') - \psi_j(a,a') = \tP_n [\hat{\phi}_j(\tP)] + (\tP_n - \tP)\{\hat{\phi}_j(\hat{\tP}) - \hat{\phi}_j(\tP)\} + \hat{R}_2(\hat{\tP},\tP) + \tE[ \Delta_{jT}(a,a') ],
\end{equation}
where $\hat{\tP}$ is an estimate of $\tP$, and $\hat{\phi}_j(\tP)$ denotes the influence function constructed using the derived outcome $\hbY$, satisfying $\tP[\hat{\phi}_j(\tP)] = 0$ and $\tP[\hat{\phi}_j^2(\tP)] < \infty$. Details of derivations are provided in the Web Supplement Section A.2. The first term, after scaling by $\sqrt{n}$, converges in distribution to a normal limit by the central limit theorem. The second empirical process term is asymptotically negligible under a Donsker condition or when sample splitting is employed \citep{van2000asymptotic,chernozhukov2018double,kennedy2020sharp}. The second order remainder term $\hat{R}_2(\hat{\tP},\tP)$ is typically of order $o_{\tP}(n^{-1/2})$ under sufficient regularity and nuisance convergence rate conditions. The final bias term arises from using the derived outcome in place of the unobservable outcome and reflects the additional error introduced by intra-subject processing, which necessitates further regulating conditions for valid inference.

Two sources of uncertainty contribute to the bias term $\Delta_{jT}$ when approximate $Y_j(a,m)$ with $\hat{Y}_j(a,m)$, namely the error from using the outcome function estimator and the error from estimating the intra-subject processing function. To make this decomposition explicit, we introduce an intermediate quantity $\Tilde{\bY}(a,m) = \hat{g}\{ \bX(a,m) - \textbf{f}(\bH(a,m)) \}$, which applies the outcome function estimator $\hat{g}$ to residuals obtained using the true intra-subject processing function $\textbf{f}$. The bias term can then be decomposed into two corresponding components,
\begin{equation}\label{eq:two_parts}
\begin{aligned}
\Delta_{jT}(a,m) 
&= \tE[\hat{Y}_{j}(a,m) \mid \bU_D(a),\bU_I(a),\bW] - \tE[\Tilde{Y}_{j}(a,m) \mid \bU_D(a),\bU_I(a), \bW]\\
&+ \tE[\Tilde{Y}_{j}(a,m) \mid \bU_D(a),\bU_I(a),\bW] -  Y_{j}(a,m).
\end{aligned}
\end{equation}
The first difference, $\tE[\hat{Y}_{j}(a,m) \mid \bU_D(a), \bU_I(a), \bW] - \tE[\Tilde{Y}_{j}(a,m) \mid \bU_D(a), \bU_I(a), \bW]$, reflects the error introduced by replacing the true intra-subject processing function $\textbf{f}$ with its fitted counterpart. The second difference, $\tE[\Tilde{Y}_{j}(a,m) \mid \bU_D(a), \bU_I(a), \bW] - Y_{j}(a,m)$, captures the approximation error resulting from using the estimator $\hat{g}$ in place of the true outcome function $g$. The following \cref{condition2_asy_rootn} imposes a requirement involving the second term.

\begin{assumption}\label{condition2_asy_rootn}
Let $\Tilde{\Delta}_{jT}(a,m) = \tE[\Tilde{Y}_{j}(a,m) \mid \bU_D(a),\bU_I(a),\bW] -  Y_{j}(a,m)$, we have \\
$\underset{1 \le j \le J}{\max} |\tE[ \Tilde{\Delta}_{jT}(a,a') ]| = o(n^{-1/2})$, for $a \in \{0,1\}$, as $T \to \infty$.
\end{assumption}
\cref{condition2_asy_rootn} requires that the bias between $\Tilde{Y}_j(a,a')$ and $Y_j(a,a')$ vanishes at the $n^{-1/2}$ rate. This condition matches Condition 4 in \cite{qiu2023unveiling} and conditions stated in Corollary 6 and Proposition 10 of \cite{du2025causal}. It is mild and can be satisfied by a broad class of estimators $\hat{g}$ for $g$ in biomedical applications, including settings where $\bX(a,m) - \textbf{f}(\bH(a,m))$ are conditionally independent given $\{\bU_D(a), \bU_I(a), \bW\}$ and settings where $\bX(a,m) - \textbf{f}(\bH(a,m))$ exhibits weak temporal dependence \citep{qiu2023unveiling,du2025causal}.

The outcome function $g$ in \cref{def:Y_def} maps the $V$-dimensional intra-subject processed residuals to a $J$-dimensional subject-level outcome. In many biomedical applications, subject-level outcomes are defined through relationships among elements of the processed residuals $\bX_t(a,m) - \mathbf{f}(\bH_t(a,m))$. Examples include functional connections, co-activation patterns, and synchronization measures, which often represent the primary scientific targets. As a result, the outcome function $g$ frequently depends on cross-products of the residual components.

\begin{definition}\label{def:cross_product}
(Cross-product type outcome with intra-subject processing).
An outcome with intra-subject processing is said to be of cross-product type if each component of the subject-level outcome $\bY(a,m)$ corresponds to a cross-product of two residual components. Specifically, the $j$th element of $\bY(a,m)$, denoted by $Y_j(a,m) = Y_{vv'}(a,m)$ for some pair $(v,v')$ with $v,v' \in \{1,\ldots,V\}$, takes the form
\begin{equation}
g \left[ \tE\left\{ \bX_{tv}(a,m) - f_v(\bH_t(a,m)) \right\}
\left\{ \bX_{tv'}(a,m) - f_{v'}(\bH_t(a,m)) \right\}
\mid \bU_D(a), \bU_I(a), \bW \right].
\end{equation}
\end{definition}

The cross-product type derived outcome $\hbY(a,m)$ and the corresponding intermediate functional $\Tilde{\bY}(a,m)$ are defined analogously. In the presence of intra-subject processing, additional control is needed for the first term in the decomposition in \cref{eq:two_parts}. This requirement motivates the following assumption.

\begin{assumption}\label{condition3_intra_rate}
For cross-product type $\hbY$ and $\Tilde{\bY}$, let $\hat{\mathbf{f}} = \{ \hat{f}_v;\ v = 1, \ldots, V \}$ denote the fitted intra-subject processing functions. We assume that the intra-subject processing satisfies $\underset{1 \le v \le V}{\max} \| \hat{f}_v - f_v \|_2 = \cO(T^{-\alpha_0})$, for some $\alpha_0 \in (0, 1/2]$ such that $n^{1/2} T^{-2\alpha_0} = o(1)$, as $n, T \to \infty$.
\end{assumption}

This condition specifies the convergence rate required for the fitted intra-subject processing functions so that the bias from estimating $\mathbf{f}$ is asymptotically negligible at the inter-subject level. When the true intra-subject processing function $f_v$ is unknown, its estimator $\hat{f}_v$ must be consistent. The framework also includes settings in which the intra-subject processing functions are known, in which case \cref{condition3_intra_rate} is no longer required since $\underset{1 \le v \le V}{\max} \| \hat{f}_v - f_v \|_2 = 0$.

For cross-product type outcomes, the induced bias term $\tE[\hat{Y}_{j}(a,m) \mid \bU_D(a), \bU_I(a), \bW] - \tE[\Tilde{Y}_{j}(a,m) \mid \bU_D(a), \bU_I(a), \bW]$ is of order $T^{-2\alpha_0}$, as shown in the Web Supplement Section A.3. This rate makes it more likely that the condition $n^{1/2} T^{-2\alpha_0} = o(1)$ holds, even when the intra-subject processing functions are estimated at rates slower than $T^{-1/2}$. This feature allows for flexible estimation in settings with temporal dependence or complex non-linear relationships between intra-subject responses and nuisance variables. As a result, data-driven machine learning methods can be used to reduce model misspecification and obtain consistent estimates of the intra-subject processing functions.

For outcomes that are not of cross-product type, the required rate in \cref{condition3_intra_rate} can differ and depends on the structure of the outcome function $g$. For example, the rs-fMRI metric Amplitude of Low-Frequency Fluctuations measures the strength of spontaneous low-frequency BOLD signal oscillations at each brain location and does not involve interactions across locations. In this case, the intra-subject processing rate condition becomes $n^{1/2} T^{-\alpha_0} = o(1)$ as $n, T \to \infty$.

\begin{lemma}\label{lemma:delta_rootn}
Suppose the fitted intra-subject processing functions $\hat{\mathbf{f}}$ and the intermediate estimate $\Tilde{\bY}(a,a')$ satisfy \cref{condition2_asy_rootn} and \cref{condition3_intra_rate}. Then the bias between the derived outcome $\hbY(a,a')$ and the true outcome $\bY(a,a')$ satisfies $\underset{1 \le j \le J}{\max} |\tE\left[ \Delta_{jT}(a,a') \right]| = o(n^{-1/2})$ as $n,T \to \infty$, for $a,a' \in \{0,1\}$.
\end{lemma}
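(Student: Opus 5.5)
The plan is to use the decomposition \cref{eq:two_parts} and bound its two pieces separately, uniformly in $j$. Taking expectations and applying the triangle inequality gives
\begin{equation*}
\max_{j} |\tE[\Delta_{jT}(a,a')]| \le \max_{j} \bigl|\tE[\hat{Y}_j(a,a') - \Tilde{Y}_j(a,a')]\bigr| + \max_{j} |\tE[\Tilde{\Delta}_{jT}(a,a')]|,
\end{equation*}
where the tower property removes the inner conditioning on $\{\bU_D(a),\bU_I(a),\bW\}$. By \cref{condition2_asy_rootn}, the second term is $o(n^{-1/2})$. What remains is to show that the first term is $\cO(T^{-2\alpha_0})$. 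Then \cref{condition3_intra_rate}, which gives $n^{1/2}T^{-2\alpha_0}=o(1)$, makes it $o(n^{-1/2})$.

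For the first term, write $j=(v,v')$, $e_{tv} = \bX_{tv}(a,m) - f_v(\bH_t(a,m))$ and $\delta_v = \hat f_v - f_v$. The estimated residual is $e_{tv} - \delta_v(\bH_t)$. Expanding the product gives
\begin{equation*}
(e_{tv}-\delta_v)(e_{tv'}-\delta_{v'}) - e_{tv}e_{tv'} = -e_{tv}\delta_{v'} - e_{tv'}\delta_v + \delta_v\delta_{v'}.
\end{equation*}
By Cauchy--Schwarz, the quadratic term is bounded by $\|\delta_v\|_2\|\delta_{v'}\|_2 = \cO(T^{-2\alpha_0})$, uniformly over $(v,v')$ by the max in \cref{condition3_intra_rate}.

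The cross terms are the main obstacle. A naive Cauchy--Schwarz bound gives only $\cO(T^{-\alpha_0})$. I would use the fact that $f_v$ is the population regression of $\bX_{tv}$ on $\bH_t$, so that $\tE[e_{tv}\mid \bH_t,\bU_D,\bU_I,\bW]=0$. If $\hat f_v$ is fit on separate (cross-fitted) time blocks, or is approximately independent of the evaluated residuals because of weak temporal dependence, then conditioning on $\hat f$ and $\bH_t$ makes $\tE[e_{tv}\delta_{v'}(\bH_t)]=0$. The cross terms therefore vanish to first order. This orthogonality is the reason the cross-product structure yields the rate $T^{-2\alpha_0}$ rather than $T^{-\alpha_0}$. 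Any residual dependence between $\hat f$ and the residuals would be handled by a leave-block-out argument, and I would control it at the same order.

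Next I pass through $\hat g$. If $\hat g$ is the sample average of cross-products, the previous step applies directly after averaging over $t$. If $\hat g$ includes a smooth transformation, as with $\text{atanh}$ of a correlation, I apply a first-order Taylor expansion with bounded derivative on a compact range. This turns the bound on the cross-product difference into the same order for $\hat Y_j - \Tilde Y_j$, with remainder of order $\|\delta\|_2^2 = \cO(T^{-2\alpha_0})$. Integrating out the latent factors and $\bW$ keeps the bound uniform in $j$, which completes the argument for each $a,a'\in\{0,1\}$.
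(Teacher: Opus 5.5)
Your outer skeleton matches the paper's. You split $\Delta_{jT}$ as in \cref{eq:two_parts}, send the second piece to \cref{condition2_asy_rootn}, show the first piece is $\cO(T^{-2\alpha_0})$, and finish with $n^{1/2}T^{-2\alpha_0}=o(1)$. The routes diverge at the crux you correctly isolate: the first-order cross terms $\frac{1}{T}\sum_t e_{tv}\delta_{v'}(\bH_t)$ (the paper's $\eta_{vt}$ is your $e_{tv}$).

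The paper removes them by assuming ``without loss of generality'' that true and fitted residuals are centered, $\tP_T[\eta_v]=\tP_T[\hat\eta_v]=0$. It also writes $\hat\eta_{vt}-\eta_{vt}=\cO(T^{-\alpha_0})$ as a common factor pulled out of $\frac{1}{T}\sum_t$, so each cross term becomes $\cO(T^{-\alpha_0})\,\tP_T[\eta_v]=0$. It then passes through $\hat g$ by the continuous mapping theorem.

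You instead use conditional-mean orthogonality of the true residual, plus independence of $\hat f$ from the residuals it is evaluated on (cross-fitting in time). Your route is more honest about the actual mechanism. Because $\delta_{v'}(\bH_t)$ varies with $t$, centering alone does not remove the empirical covariance of $\eta_v$ and $\delta_{v'}(\bH)$. The population $L_2$ rate alone does not control it either: an in-sample fit that interpolates the data can have small $\|\hat f_v-f_v\|_2$ yet zero residuals. Likewise, a bounded-derivative Taylor step transmits a rate, which continuous mapping does not.

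The cost is that $f_v(\bH_t)=\tE[\bX_{tv}\mid\bH_t,\bU_D(a),\bU_I(a),\bW]$ and the cross-fitting independence are not among the lemma's hypotheses. State them as added assumptions rather than calling the first a ``fact''.

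One step of yours needs repair. A cross term with zero mean does not keep zero mean after multiplication by the random derivative of a nonlinear $\hat g$. For $\text{atanh}$ of a sample correlation, that derivative is evaluated at sample cross-products built from the same residuals. So $\tE[C]=0$ for the averaged cross term $C$ does not pass through your Taylor expansion, and your Cauchy--Schwarz bound on $|C|$ gives only $T^{-\alpha_0}$.

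Bound the magnitude instead. Conditionally on $\hat f$, $C$ is an average of $T$ mean-zero terms. Under weak temporal dependence and bounded moments, $\tE|C|\le(\tE C^2)^{1/2}=\cO(T^{-1/2}\|\delta_{v'}\|_2)=\cO(T^{-1/2-\alpha_0})$. This is $\cO(T^{-2\alpha_0})$ precisely because $\alpha_0\le 1/2$ in \cref{condition3_intra_rate}. With this, plus derivative and moment bounds uniform in $(v,v')$ to handle the maximum over $j$, your argument closes.
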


The proof of \cref{lemma:delta_rootn} is given in the Web Supplement Section A.3. \cref{lemma:delta_rootn} establishes an $n^{1/2}$ convergence rate for $\Delta_{jT}(a,a')$, which is stronger than the asymptotic unbiasedness condition in \cref{condition1_asy_unbias} required for identification. In our motivating Example~\ref{example1}, we use the Fisher z-transformed sample Pearson correlation as the estimator $\hat{g}$ for the outcome function $g$. Under $\beta$-mixing and sub-Gaussian moment conditions, Proposition S2 of \cite{qiu2023unveiling} shows that $\Tilde{\bY}_{FC}(a,a')$ satisfies \cref{condition2_asy_rootn} when $n^{1/2} \log J = o(T)$. In addition, \cref{condition3_intra_rate} can be satisfied by flexibly regressing out intra-subject nuisance effects using semiparametric machine learning methods.

\cref{lemma:delta_rootn} implies that the term $\tE[\Delta_{jT}]$ in \cref{eq:devY_error_decomp} is asymptotically negligible under the stated conditions, which justifies the use of influence functions constructed on the derived outcomes for statistical inference. We therefore consider the one-step augmented inverse probability weighted (AIPW) estimator for the $j$th component of $\bm{\hat{\psi}}(a,a')$, given by
\begin{equation}
\begin{aligned}
    \hat{\psi}_{n,j}^{\text{AIPW}}(a,a') &= \tP_n \Big[\frac{\mathbb{I}(A=a)}{\hat{\pi}_A(a | \bW)} \frac{\hat{\pi}_{M}(M | a',\bW )}{\hat{\pi}_{M}(M | a,\bW )} \left\{\hat{Y}_{j}-\hat{b}_{j}(M, a,\bW)\right\}\\
    &+ \frac{\mathbb{I}(A=a')}{\hat{\pi}_{A}(a' | \bW)}\left\{ \hat{b}_{j}(M,a,\bW) - \hat{\xi}_{a a' j}(\bW) \right\} + \hat{\xi}_{a a' j}(\bW) \Big],
\end{aligned}
\end{equation}
where $\pi_A(a | \bw) = \tP(A = a | \bW = \bw)$ denotes the propensity score, $\pi_M(m | a, \bw) = \tP(M = m | A = a, \bW = \bw)$ denotes the mediator density, and $b_j(m,a,\bw) = \tE[\hat{Y}_j | M = m, A = a, \bW = \bw]$ denotes the outcome regression. We further define $\xi_{aa'j}(\bw) = \int b_j(m,a,\bw) \pi_M(m | a', \bw) dm$. The collection $\{ \hat{\pi}_A, \hat{\pi}_M, \hat{b}_j, \hat{\xi}_{aa'j} \}$ denotes estimators of the corresponding nuisance functions $\{ \pi_A, \pi_M, b_j, \xi_{aa'j} \}$. Direct estimation of the density of a continuous mediator is often challenging. To avoid explicit estimation of $\pi_M(m \mid a,\bw)$, we combine Bayes’ rule with a sequential regression representation. By Bayes' rule, 
$$\frac{\pi_{M}(m | a',\bw)}{\pi_{M}(m | a, \bw)} = \frac{\pi_{A}(a' | m, \bw)\pi_A(a | \bw)}{\pi_A(a | m, \bw)\pi_{A}(a' | \bw)}.$$
In addition, the functional $\xi_{aa'j}(\bW)$ admits the sequential regression form $\xi_{aa'j}^{sr}(\bW) = \tE[b_{j}(m,a,\bW) \mid A = a', \bW = \bw]$, which can be estimated without evaluating $\pi_M(m | a,\bw)$. Using these representations, the AIPW estimator can be rewritten as
\begin{equation}\label{eq:NDE_AIPW}
\begin{aligned}
    \hat{\psi}_{n,j}^{\text{AIPW}}(a,a') &= \tP_n \Big [ \frac{\mathbb{I}(A=a)}{\hat{\pi}_A(a | \bW)} \frac{\hat{\pi}_{A}(a' | M, \bW)\hat{\pi}_A(a | \bW)}{\hat{\pi}_A(a | M, \bW)\hat{\pi}_{A}(a' | \bW)} \left\{\hat{Y}_{j}-\hat{b}_{j}(M,a,\bW) \right\}\\
    &+ \frac{\mathbb{I}(A=a')}{\hat{\pi}_{A}(a' | \bW )}\left\{ \hat{b}_{j}(M,a,\bW ) - \hat{\xi}_{aa'j}^{sr}(\bW) \right\} + \hat{\xi}_{aa'j}^{sr}(\bW) \Big],
\end{aligned}
\end{equation}
and the asymptotic expansion of $\hat{\psi}_{n,j}^{\text{AIPW}}(a,a')$ is provided in the following \cref{thm:asy_expansion}.

\begin{thm}\label{thm:asy_expansion}
Let $O_1,\ldots,O_n$ be i.i.d. observations with $O_i = (A_i,\bW_i,M_i,\bX_i,\bH_i)$. Suppose there exists an intra-subject processed derived outcome $\hbY$ satisfying Assumptions \labelcref{condition2_asy_rootn,condition3_intra_rate}. Under Assumptions \labelcref{level2_consistency,level2_cond_ran,level2_positivity,level2_cross_world}, assume the following conditions hold for $a,a' \in \{0,1\}$ and $j=1,\ldots,J$.\\
(i). (Boundedness). There exists $c_0 \in (0,1)$ and positive constants $c_1, c_2, c_3, c_4$, such that $\pi_A(a|\bW), \hat{\pi}_A(a|\bW) \in [c_0, 1-c_0]$, $\underset{1 \leq j \leq J}{\max}\{ |Y_j|, |\hat{Y}_j| \} < c_1$, $\underset{1 \leq j \leq J}{\max}\{ \| b_j\|_{\infty}, \|\hat{b}_j \|_{\infty} \} < c_2$, $\underset{1 \leq j \leq J}{\max}\{ \| \xi_{aa'j}^{sr}\|_{\infty}, \| \hat{\xi}_{aa'j}^{sr} \|_{\infty} \} < c_3$, $\max \{ \| \frac{\pi_{A}(a' | M, \bW)\pi_A(a | \bW)}{\pi_A(a | M, \bW)\pi_{A}(a' | \bW)} \|_{\infty}, \| \frac{\hat{\pi}_{A}(a' | M, \bW)\hat{\pi}_A(a | \bW)}{\hat{\pi}_A(a | M, \bW)\hat{\pi}_{A}(a' | \bW)} \|_{\infty} \} < c_4$.\\
(ii). (Nuisance rates). The nuisance estimators satisfy $\| \pi_{A}(a |m, \bw)\pi_A(a|\bw) - \hat{\pi}_{A}(a | m, \bw)\hat{\pi}_A(a | \bw) \|_2 = \cO(n^{-\alpha})$, $\| b_{j}(m,a,\bw) - \hat{b}_{j}(m,a,\bw) \|_2 = \cO(n^{-\beta})$, $\| \pi_{A}(a|\bw) - \hat{\pi}_A(a | \bw) \|_2 = \cO(n^{-\gamma})$, and $\| \xi_{aa'j}^{sr}(\bw) - \hat{\xi}_{aa'j}^{sr}(\bw) \|_2 = \cO(n^{-\zeta})$, where $\alpha + \beta > \frac{1}{2}$, $\beta + \gamma > \frac{1}{2}$, and $\gamma + \zeta > \frac{1}{2}$.\\
Then, as $n,T,J \rightarrow \infty$, the AIPW estimator defined in \cref{eq:NDE_AIPW} admits the expansion
$( \hat{\psi}_{n,j}^{\text{AIPW}}(a,a') - \psi_j(a,a') ) = \tP_n [\phi_{aa'j}(O)] + o_{\tP}(n^{-1/2})$. Where the influence function $\phi_{aa'j}(O_i)$ is given by
\begin{equation}
\begin{aligned}
    \phi_{aa'j}(O_i) &= \frac{\mathbb{I}(A_i=a)}{\pi_A(a | \bW_i)} \frac{\pi_{A}(a' | M_i, \bW_i)\pi_A(a | \bW_i)}{\pi_A(a | M_i, \bW_i)\pi_{A}(a' | \bW_i)} \left(\hat{Y}_{ij}-b_j(M_i,a,\bW_i)\right)\\
    &+ \frac{\mathbb{I}(A_i=a')}{\pi_{A}(a' | \bW_i )}\left( b_j\left(M_i,a,\bW_i\right) - \xi_j^{sr}(a,a',\bW_i) \right) + \xi_j^{sr}(a,a',\bW_i) - \psi_j(a,a').
\end{aligned}
\end{equation}
\end{thm}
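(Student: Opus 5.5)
We follow the decomposition in \cref{eq:devY_error_decomp}, applied to the AIPW estimator in \cref{eq:NDE_AIPW}. Write $\hat\psi_{n,j}^{\text{AIPW}}(a,a') = \tP_n[\hat m_j(O;\hat\tP)]$, where $\hat m_j(O;\tP)$ is the uncentered integrand built from the derived outcome $\hbY$. Let $\hat\psi_j(a,a')=\tE[\hat Y_j(a,M(a'))]$ be the derived-outcome target. By \cref{thm:identification_NDE} (applied with $\hbY$ in place of $\bY$), it equals $\tE[\xi_{aa'j}^{sr}(\bW)]$. Thus $\phi_{aa'j}(O)=\hat m_j(O;\tP)-\psi_j(a,a')$ and
\begin{equation*}
\hat\psi_{n,j}^{\text{AIPW}}-\psi_j = \tP_n[\phi_{aa'j}] + (\tP_n-\tP)\{\hat m_j(\hat\tP)-\hat m_j(\tP)\} + \{\tP\hat m_j(\hat\tP)-\hat\psi_j\} + \{\hat\psi_j-\psi_j\}.
\end{equation*}
The last term is $\tE[\Delta_{jT}(a,a')]$, which is $o(n^{-1/2})$ uniformly in $j$ by \cref{lemma:delta_rootn}, using Assumptions \labelcref{condition2_asy_rootn,condition3_intra_rate}. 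Note that $\tP_n[\phi_{aa'j}]$ has a centered summand only once we account for $\tE[\hat m_j(\tP)]=\hat\psi_j$. The difference between $\hat\psi_j$ and $\psi_j$ is therefore absorbed by the same bias term.

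For the empirical process term, we use cross-fitting: the nuisances $\hat\pi_A,\hat b_j,\hat\xi^{sr}$ are estimated on independent folds. Conditionally on the training fold, the term is a centered mean of i.i.d. variables. Its variance is bounded by $n^{-1}\|\hat m_j(\hat\tP)-\hat m_j(\tP)\|_2^2$. By boundedness condition (i), this $L_2$ norm is controlled by a constant times the sum of the nuisance $L_2$ errors in (ii), which all tend to zero. Chebyshev's inequality then gives $o_\tP(n^{-1/2})$. Alternatively one may assume a Donsker class. The intra-subject fitting $\hat{\mathbf f}$ is done within each subject, so the $\hat Y_{ij}$ remain i.i.d. across subjects and this step is unaffected.

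The main obstacle is the second-order remainder $R_2=\tP\hat m_j(\hat\tP)-\hat\psi_j$. We compute it by iterated expectations. First condition on $(M,A=a,\bW)$ in the first term, which replaces $\hat Y_j$ by $b_j$. This gives $\tE[\mathbb I(A=a)\hat r/\hat\pi_A(a|\bW)\,(b_j-\hat b_j)]$, where $\hat r$ is the estimated density ratio. Next, convert $\mathbb I(A=a)$ weighting under $\pi_M(\cdot|a,\bW)$ into integration against $\pi_M(\cdot|a',\bW)$ using the true ratio $r$. The second term, conditioned on $A=a',\bW$, yields $\tE[\pi_A(a'|\bW)/\hat\pi_A(a'|\bW)\,(\xi^{sr}-\hat\xi^{sr})]$ plus the $\hat b_j-b_j$ integrated under $\pi_M(\cdot|a',\bW)$. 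Adding $\tE[\hat\xi^{sr}-\xi^{sr}]$ and regrouping, the remainder becomes a sum of products:
\begin{equation*}
\tE\Big[\Big(\tfrac{\pi_A(a|\bW)\hat r}{\hat\pi_A(a|\bW) r}-1\Big)(b_j-\hat b_j)\,\tfrac{\mathbb I(A=a) r}{\pi_A(a|\bW)}\Big]+\tE\Big[\tfrac{\hat\pi_A(a'|\bW)-\pi_A(a'|\bW)}{\hat\pi_A(a'|\bW)}\big\{(\hat b_j-b_j)\text{-part}+(\hat\xi^{sr}-\xi^{sr})\big\}\Big].
\end{equation*}
The first factor in the first product is rewritten through the Bayes-rule representation as a difference of $\pi_A(a|m,\bw)\pi_A(a|\bw)$-type products, with denominators bounded by (i). The careful algebra at this step, matching each product to the pairs $(\alpha,\beta)$, $(\beta,\gamma)$ and $(\gamma,\zeta)$, is where we expect the difficulty. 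Given that matching, Cauchy–Schwarz and the boundedness constants give $R_2=\cO(n^{-\alpha-\beta}+n^{-\beta-\gamma}+n^{-\gamma-\zeta})=o(n^{-1/2})$, since each exponent sum exceeds $1/2$. Combining the three negligible terms yields the stated expansion.
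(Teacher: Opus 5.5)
Your proposal is correct and follows essentially the paper's route: the one-step decomposition in \cref{eq:devY_error_decomp}, an empirical-process term made negligible by cross-fitting, a second-order remainder written as products of nuisance errors and bounded by Cauchy--Schwarz under the pairs $(\alpha,\beta)$, $(\beta,\gamma)$, $(\gamma,\zeta)$, and \cref{lemma:delta_rootn} for the derived-outcome bias. Your regrouping of the remainder is in fact more complete than the paper's displayed remainder, which omits the $(\beta,\gamma)$ product involving $\int(b_j-\hat b_j)\,\pi_M(m\mid a',\bW)\,dm$; this does not change the rate. The one slip is bookkeeping: because $\phi_{aa'j}$ is centered at $\psi_j$ rather than at $\hat\psi_j$, your four-term display double-counts $\hat\psi_j-\psi_j$, so you should either drop the last term or recenter $\phi$ at $\hat\psi_j$. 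This is harmless, since $\hat\psi_j-\psi_j=\tE[\Delta_{jT}(a,a')]=o(n^{-1/2})$.
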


The proof of \cref{thm:asy_expansion} is provided in the Web Supplement Section A.3. The propensity score models $\hat{\pi}_A(a \mid \bw)$ and $\hat{\pi}_A(a \mid m,\bw)$ are estimated by fitting the treatment indicator $A$ on $\bW$ or on $(M,\bW)$, respectively, which follows the same strategy as in causal mediation analysis with observable outcomes. The outcome regression $\hat{b}_j(m,a,\bw)$ is obtained by regressing the intra-subject processed derived outcome $\hbY$ on $(A,M,\bW)$, since the target function $b_j(m,a,\bw)$ is defined in terms of the derived outcome rather than the unobservable outcome $\bY$. This differs from classical causal inference settings, where outcome regression is specified directly for $\bY$. The quantity $\hat{\xi}_j^{sr}(a,a',\bW)$ is estimated using a sequential regression approach.The $\hat{b}_j(M,a,\bW)$ is first regressed on $(A,\bW)$, and the resulting fitted model is then evaluated at $(a',\bW)$.

The nuisance rate conditions in \cref{thm:asy_expansion} guarantee that the second order remainder term satisfies $\hat{R}_2(\hat{\tP},\tP) = o_{\tP}(n^{-1/2})$. These cross-product type constraints allow the inter-subject nuisance estimators to converge at rates slower than $n^{-1/2}$, enabling the use of a broad class of flexible statistical and machine learning methods. To ensure the empirical process term in the semiparametric von Mises expansion in \cref{eq:devY_error_decomp} is asymptotically negligible, we employ cross fitting when estimating the nuisance components $\{ \hat{\pi}_A(a \mid \bw), \hat{\pi}_A(a \mid m, \bw), \hat{b}_j(m,a,\bw), \hat{\xi}_j^{sr}(a,a',\bw) \}$. This strategy removes the need for Donsker conditions and improves robustness. As an immediate consequence of \cref{thm:asy_expansion}, the estimator $\hat{\psi}_{nj}^{\text{AIPW}}(a,a')$ exhibits multiply robust properties.

\begin{corollary}\label{corollary:multiply_robust}
(Multiple robustness) 
The AIPW estimator $\hat{\psi}_{n,j}^{\text{AIPW}}(a,a')$ defined in \cref{thm:asy_expansion} is consistent for $\psi_j(a,a')$ if any one of the following three conditions is satisfied:\\
(i). $\hat{\pi}_{A}(a | m, \bw) \overset{P}{\rightarrow} \pi_{A}(a | m, \bw)$ and $\hat{\pi}_A(a | \bw) \overset{P}{\rightarrow} \pi_A(a | \bw)$;\\
(ii). $\hat{b}_j(m,a,\bw) \overset{P}{\rightarrow} b_j(m,a,\bw)$ and $\hat{\pi}_A(a | \bw) \overset{P}{\rightarrow} \pi_A(a | \bw)$;\\
(iii). $\hat{b}_j(m,a,\bw) \overset{P}{\rightarrow} b_j(m,a,\bw)$ and $\hat{\xi}_j^{sr}(a,a',\bW_i) \overset{P}{\rightarrow} \xi_j^{sr}(a,a',\bW_i)$.
\end{corollary}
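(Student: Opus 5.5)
The plan is to work with the probability limits of the nuisance estimators and show that the limiting bias of the estimator is zero under each of the three conditions. Write $\bar\pi_A(a\mid\bw)$, $\bar\pi_A(a\mid m,\bw)$, $\bar b_j$ and $\bar\xi^{sr}_{aa'j}$ for these limits. Under condition (i) of \cref{thm:asy_expansion}, the bounded weights and outcomes, together with cross fitting, let the law of large numbers apply fold by fold. Hence $\hat{\psi}_{n,j}^{\text{AIPW}}(a,a')$ converges in probability to $\tP[\bar\varphi_j(O)]$, where $\bar\varphi_j$ is the uncentered influence-function expression in \cref{eq:NDE_AIPW} evaluated at the limits. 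The plug-in error vanishes because the integrand is Lipschitz in the nuisances on the bounded region of condition (i). By \cref{lemma:delta_rootn}, the bias from using $\hbY$ instead of $\bY$ is $o(1)$. By \cref{thm:identification_NDE}, $\psi_j(a,a')$ equals $\tE[\xi^{sr}_{aa'j}(\bW)]$ up to $o(1)$, with $\xi^{sr}_{aa'j}$ defined through $b_j$ for the derived outcome. It therefore suffices to show $\tP[\bar\varphi_j]=\tE[\xi^{sr}_{aa'j}(\bW)]$ in each case.

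Write the weight as $\bar r(M,\bW)=\bar\pi_A(a'\mid M,\bW)\bar\pi_A(a\mid\bW)/\{\bar\pi_A(a\mid M,\bW)\bar\pi_A(a'\mid\bW)\}$.

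\emph{Case (iii).} Here $\bar b_j=b_j$ and $\bar\xi=\xi^{sr}$. Conditioning on $(A=a,M,\bW)$ kills the first term, since $\tE[\hat Y_j-b_j\mid A=a,M,\bW]=0$. Conditioning on $(A=a',\bW)$ kills the second term, since $\tE[b_j(M,a,\bW)\mid A=a',\bW]=\xi^{sr}$. The third term then leaves $\tE[\xi^{sr}]$, and the incorrect propensities never matter.

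\emph{Case (ii).} Here $\bar b_j=b_j$ and $\bar\pi_A(a'\mid\bW)=\pi_A(a'\mid\bW)$. The first term again vanishes by conditioning. Conditioning the second term on $\bW$ gives $\pi_A(a'\mid\bW)/\pi_A(a'\mid\bW)\cdot\{\xi^{sr}-\bar\xi\}$. Adding the third term $\bar\xi$ leaves $\xi^{sr}$.

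\emph{Case (i).} Both propensities are correct, so $\bar r$ equals the true density ratio $\pi_M(M\mid a',\bW)/\pi_M(M\mid a,\bW)$ by Bayes' rule. The first term becomes
\[
\tE\Bigl[\pi_A(a\mid\bW)\pi_A(a\mid\bW)^{-1}\textstyle\int\{b_j(m,a,\bW)-\bar b_j(m,a,\bW)\}\pi_M(m\mid a',\bW)\,dm\Bigr].
\]
The second term becomes $\tE[\int \bar b_j\pi_M(m\mid a',\bW)dm-\bar\xi]$. Adding the third term, the $\bar b_j$ and $\bar\xi$ pieces cancel, leaving $\tE[\int b_j\pi_M(m\mid a',\bW)dm]=\tE[\xi^{sr}]$.

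An alternative route to the same conclusion is through the second-order remainder from the proof of \cref{thm:asy_expansion}. That remainder is a sum of products of errors: (propensity-ratio error)$\times$($b$ error), ($\pi_A$ error)$\times$($\xi$ error), and ($b$ error)$\times$($\pi_A$ error). Each listed condition makes one factor in every product vanish. I will present the direct conditioning computation above and mention this alternative.

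The main obstacle is case (i). The density-ratio identity there requires \emph{both} $\hat\pi_A(a\mid m,\bw)$ and $\hat\pi_A(a\mid\bw)$ to be consistent, and it must be checked that the change of measure from $A=a$ to $A=a'$ is valid under the positivity bounds in condition (i) of \cref{thm:asy_expansion}. I would verify this carefully via Bayes' rule, $\pi_A(a\mid m,\bw)\pi_M(m\mid a,\bw)/\pi_A(a\mid\bw)=p(m\mid\bw)$. A secondary technical point is justifying convergence to the probability limits uniformly over $j$; the corollary is stated for fixed $j$, so I will treat each $j$ separately and rely on the uniform bounds $c_1,\dots,c_4$ together with dominated convergence.
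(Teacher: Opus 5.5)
Your argument is correct, but it takes a different route from the paper's. The paper proves the corollary in one line. It reuses the explicit second-order remainder $\hat R_2(\hat{\tP},\tP)$ from the proof of \cref{thm:asy_expansion}, which is a sum of three products: weight-ratio error times $b_j$ error, $\pi_A(a\mid\bW)$ error times $b_j$ error, and $\pi_A(a'\mid\bW)$ error times $\xi^{sr}$ error. It notes that each of (i)--(iii) kills one factor in every product, and defers the other terms of \cref{eq:devY_error_decomp} to \cref{thm:asy_expansion}.

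You instead compute the limit of the estimator by iterated conditioning in each scenario. Your three computations are right, and case (i) is exactly the change of measure $\tE[w\,h(M,\bW)]=\tE[\int h(m,\bW)\pi_M(m\mid a',\bW)\,dm]$ for the true weight $w$.

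\emph{What your route buys.} It is self-contained and does not depend on the remainder algebra. This matters, because the remainder displayed in the paper omits a cross term, $\tE[\{\pi_A(a'\mid\bW)/\hat\pi_A(a'\mid\bW)-1\}\{\tilde\xi(\bW)-\xi^{sr}_{aa'j}(\bW)\}]$, where $\tilde\xi(\bW)=\int\hat b_j(m,a,\bW)\pi_M(m\mid a',\bW)\,dm$. The omission is harmless for the corollary, since this term is again a propensity error times a $b_j$ error. Keep it in mind, though, if you present the remainder argument as your alternative.

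\emph{What it costs.} Your route assumes that the misspecified nuisance estimators have probability limits $\bar\pi_A,\bar b_j,\bar\xi^{sr}_{aa'j}$, which the corollary does not grant. The remainder route needs only that these estimators stay bounded. Under cross-fitting, $(\tP_n-\tP)\hat\varphi_j=O_{\tP}(n^{-1/2})$ conditionally on the training folds, where $\hat\varphi_j$ is the uncentered influence-function expression evaluated at the fold-specific estimates.

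You can remove the extra assumption without changing your argument: run the same conditioning identities with the fold-specific estimates plugged in instead of limits. In case (iii), for example, this bounds $\tP[\hat\varphi_j]-\tE[\xi^{sr}_{aa'j}(\bW)]$ by a constant depending on $c_0$ and $c_4$ times $\|\hat b_j-b_j\|_2+\|\hat\xi^{sr}_{aa'j}-\xi^{sr}_{aa'j}\|_2$. Cases (i) and (ii) work the same way.
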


The proof of \cref{corollary:multiply_robust} is given in the Web Supplement Section A.3. Compared with classical causal inference settings, the presence of intra-subject processing requires additional conditions, namely Assumptions \labelcref{condition2_asy_rootn,condition3_intra_rate}, to obtain an asymptotic linear representation for $\hat{\psi}_{nj}^{\text{AIPW}}(a,a')$. When these intra-subject level conditions are satisfied, multiple robustness properties can be established at the inter-subject level. To reduce the risk of model misspecification, we use flexible machine learning methods to estimate both intra-subject processing functions and inter-subject nuisance models. In particular, we adopt the Super Learner framework, an ensemble learning approach that combines multiple candidate algorithms to improve predictive performance \citep{breiman1996stacked,van2007super}. Super Learner forms an optimally weighted combination of pre-specified parametric and nonparametric learners based on cross-validated prediction error. The resulting estimator is asymptotically guaranteed to perform at least as well as the best learner in the library, supporting accurate estimation of both intra- and inter-subject components and enhancing overall robustness.

\section{Simultaneous Inference}
\label{sec:simultaneous}

By the central limit theorem, \cref{thm:asy_expansion} directly yields asymptotic normality of the individual AIPW estimators, formalized in \cref{corollary:asy_norm}.

\begin{corollary}\label{corollary:asy_norm}
(Asymptotic normality)
Under the conditions of \cref{thm:asy_expansion}, assume that $\sigma^2_{aa'j} = \tP [\phi^2_{aa'j}(O)] \ge c_5$ for some constant $c_5 > 0$ and for $j = 1,\ldots,J$. Then we have $\sqrt{n}\left( \hat{\psi}_{n,j}^{\text{AIPW}}(a,a') - \psi_j(a,a') \right) \overset{d}{\rightarrow} N(0, \sigma^2_{aa'j}).$
\end{corollary}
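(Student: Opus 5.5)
The plan is to combine the asymptotic linear expansion of \cref{thm:asy_expansion} with a central limit theorem and Slutsky's lemma. By \cref{thm:asy_expansion}, for each fixed $j$,
\[
\sqrt{n}\bigl(\hat{\psi}_{n,j}^{\text{AIPW}}(a,a') - \psi_j(a,a')\bigr) = \sqrt{n}\,\tP_n[\phi_{aa'j}(O)] + o_{\tP}(1).
\]
It therefore suffices to show that $\sqrt{n}\,\tP_n[\phi_{aa'j}(O)] = n^{-1/2}\sum_{i=1}^n \phi_{aa'j}(O_i)$ converges in distribution to $N(0,\sigma^2_{aa'j})$. Slutsky's lemma then absorbs the $o_{\tP}(1)$ remainder.

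\textbf{Step 1: mean zero.} I would verify that $\tP[\phi_{aa'j}(O)] = 0$ by iterated expectations.
\begin{itemize}
\item The first term has conditional mean zero given $(A=a,M,\bW)$, since $b_j(M,a,\bW) = \tE[\hat{Y}_j \mid M, A=a, \bW]$.
\item The second term has conditional mean zero given $(A=a',\bW)$, by the sequential-regression definition of $\xi^{sr}_j(a,a',\bW)$.
\item The third term, $\xi^{sr}_j(a,a',\bW)$, has mean equal to the $\hbY$-based functional $\hat{\psi}_j(a,a')$.
\end{itemize}
This functional differs from $\psi_j(a,a')$ by $\tE[\Delta_{jT}(a,a')]$, which is $o(n^{-1/2})$ by \cref{lemma:delta_rootn}. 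After scaling by $\sqrt{n}$ this discrepancy is $o(1)$, so it can be folded into the Slutsky remainder. This is consistent with how the bias term was already absorbed in \cref{thm:asy_expansion}.

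\textbf{Step 2: bounded variance.} I would establish $\sigma^2_{aa'j} = \tP[\phi^2_{aa'j}] < \infty$ using the boundedness conditions (i) of \cref{thm:asy_expansion}.
\begin{itemize}
\item The first term is bounded by $c_0^{-1} c_4 (c_1 + c_2)$, using $\pi_A \ge c_0$, the density-ratio bound $c_4$, and $|\hat{Y}_j|, |b_j| \le c_1, c_2$.
\item The second term is bounded by $c_0^{-1}(c_2 + c_3)$.
\item The remaining terms are bounded by $c_3 + c_1$.
\end{itemize}
Hence $\phi_{aa'j}$ is uniformly bounded, and all moments are finite. Together with the assumed lower bound $\sigma^2_{aa'j} \ge c_5 > 0$, the limit is nondegenerate.

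\textbf{Main obstacle.} The delicate point is that $T$ and $J$ grow with $n$. Consequently $\hat{Y}_{ij}$, and hence $\phi_{aa'j}$ and $\sigma^2_{aa'j}$, form a triangular array rather than a fixed i.i.d. sequence, so the classical Lindeberg--L\'evy CLT does not apply directly. I would handle this with the Lindeberg--Feller (or Lyapunov) CLT applied to the standardized summands $\phi_{aa'j}(O_i)/\sigma_{aa'j}$. These are i.i.d. within each row, have mean $o(n^{-1/2})$ after recentering, and are uniformly bounded by a constant $C$ independent of $n$ because of the uniform bounds above. The variance lower bound $c_5$ gives a Lyapunov ratio of at most
\[
\frac{n\,C^3}{(n c_5)^{3/2}} \to 0 .
\]
This yields $n^{-1/2}\sum_i \phi_{aa'j}(O_i)/\sigma_{aa'j} \overset{d}{\rightarrow} N(0,1)$. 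Multiplying by $\sigma_{aa'j}$, which stays bounded away from zero and infinity, and applying Slutsky's lemma completes the argument, with the limit understood as $N(0,\sigma^2_{aa'j})$ along the sequence.
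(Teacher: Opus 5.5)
Your proposal is correct and follows the same route as the paper's proof, which combines the asymptotic linear expansion of \cref{thm:asy_expansion} with the central limit theorem, using $\tP[\phi_{aa'j}]=0$ and $\sigma^2_{aa'j}<\infty$. Your version is more careful than the paper's in three respects: you note that $\tP[\phi_{aa'j}]$ equals $\tE[\Delta_{jT}(a,a')]=o(n^{-1/2})$ rather than exactly zero; you recognize that, because $\hat{Y}_j$ (and hence $b_j$, $\phi_{aa'j}$, $\sigma^2_{aa'j}$) depends on $T=T_n$, the summands form a triangular array; and you handle that array with a Lyapunov argument that actually uses the lower bound $c_5$, which the paper's proof never invokes.
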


The proof of \cref{corollary:asy_norm} is given in the Web Supplement Section A.4. The $j$th component of the AIPW estimators, $\hat{\tau}_{n,j}^{\mathrm{AIPW}} \in \{ \hat{\tau}_{n,j,\mathrm{AIPW}}^{\mathrm{NDE}}, \hat{\tau}_{n,j,\mathrm{AIPW}}^{\mathrm{NIE}}, \hat{\tau}_{n,j,\mathrm{AIPW}}^{\mathrm{ATE}} \}$, can be obtained directly as a contrast of $\hat{\psi}_{n,j}^{\text{AIPW}}(a,a')$. The corresponding influence function is denoted by $\eta_j \in \{ \eta_j^{\mathrm{NDE}} = \phi_{10j} - \phi_{00j}, \eta_j^{\mathrm{NIE}} = \phi_{11j} - \phi_{10j}, \eta_j^{\mathrm{ATE}} = \phi_{11j} - \phi_{00j} \}$, with variance $\theta_j^2 = \tP [\eta_j^2(O)]$. As a direct consequence of \cref{corollary:asy_norm}, it follows that $\sqrt{n} (\hat{\tau}_j - \tau_j) \overset{d}{\rightarrow} \cN(0,\theta_j^2)$.

Following \cite{du2025causal}, consistent variance estimators $\hat{\sigma}^2_{aa'j} = \tP_n [\hat{\phi}^2_{aa'j}(O)]$ for $\sigma^2_{aa'j}$ can be constructed using the estimated influence functions $\hat{\phi}_{aa'j}(O_i)$ with cross-fitting. Therefore, the variance estimator of $\hat{\tau}_{n,j}^{\mathrm{AIPW}}$ is $\hat{\theta}_j^2 = \tP_n [\hat{\eta}^2_{aa'j}(O)]$. This facilitates hypothesis testing for $H_{0j}:\tau_j = 0$ using the test statistic $t_j = n^{-1/2}\hat{\tau}_j/\hat{\theta}_j$, as well as construction of $(1-\alpha)$ confidence intervals of the form $\hat{\tau}_j \pm n^{-1/2} z_{1-\alpha/2}\hat{\theta}_j$, where $z_{1-\alpha/2}$ denotes the $(1-\alpha/2)$ quantile of the standard normal distribution.

\subsection{Multiple Testing Control}
\label{sec:multi_test}

For high-dimensional inference, we adopt the Gaussian approximation and multiplier bootstrap results developed for the derived outcome framework in \cite{qiu2023unveiling} and \cite{du2025causal}. These results enable the construction of simultaneous inference intervals under mild regularity conditions, including exclusion of super-efficient components whose asymptotic variances vanish for which $\theta_j^2 \rightarrow 0$ \citep{chernozhukov2013gaussian,belloni2018high}. In practice, this is implemented by screening out estimators with small estimated variances $\hat{\theta}_j^2$ to construct an empirical informative set $\cS_1 = \{j: \hat{\theta}_j^2 \ge c_0 \}$, where $c_0 > 0$ is a small constant. Under bounded variance and weak dependence conditions, the distribution of the maximum standardized estimator over this set, $\underset{j \in \cS_1}{\max} \sqrt{n} |\hat{\tau}_j - \tau_j| \hat{\theta}_j^{-1}$, admits a Gaussian approximation, which enables the construction of simultaneous confidence intervals through a multiplier bootstrap procedure \citep{qiu2023unveiling,du2025causal}.

When the number of hypotheses is large, simultaneous confidence intervals can be conservative. We employ a multiple testing procedure that controls the false discovery proportion (FDP), defined as the ratio of false positives to the total number of discoveries. For a pre-specified tolerance level $c > 0$, the goal is to control the probability $\tP(\text{FDP} > c)$, which is referred to as the FDP exceedance rate (FDPex), at level $\alpha$. Following \citet{du2025causal}, this influence function based adaptive procedure combines a step-down algorithm with a Gaussian multiplier bootstrap and an augmentation step, yielding asymptotic control of $\tP(\text{FDP} > c) \le \alpha$. This approach accounts for the variability of the FDP within a single realized sample, leading to a stronger error control than methods targeting the false discovery rate \citep{genovese2006exceedance,qiu2023unveiling,du2025causal}.

In this work, these inferential procedures are applied to our proposed multiply robust estimators $\hat{\psi}_{n,j}^{\text{AIPW}}(a,a')$ for intra-subject processed outcomes. Theoretical validity follows from establishing the influence function representation in \cref{thm:asy_expansion} and associated regularity conditions required for Gaussian approximation remain satisfied in the presence of intra-subject processing. Additional implementation details and algorithmic steps are provided in the Web Supplement Section B.

\section{Simulations}
\label{sec:simulations}

We consider a simulation setting with one brain region serving as the seed region (region 0) and two additional regions (regions 1 and 2). The FC between the region 0 and region 1 has a nonzero NDE, while the FC between region 0 and region 2 has zero NDE. Independent and identically distributed confounders $\bW \in \tR^3$ are generated from a three dimensional multivariate normal distribution with mean zero and covariance matrix $\bm \Sigma_W = (\sigma_{W,j_1,j_2})_{3 \times 3}$, where $\sigma_{W,j_1,j_2} = 0.5^{|j_1 - j_2|}$. The treatment $A$ is generated from a logistic regression model with probability of treatment given by $P(A=1 \mid \bW) = 1/(1 + \exp(-0.2 + 0.4 \bone^\top_3 \bW))$. The latent motion factor is defined as $U_M = 0.5 + 0.5A + 0.1 \bone^\top \bW + \epsilon_U$, where $\epsilon_U \overset{iid}{\sim} N(0,0.4)$. The mediator $M$, representing the time invariant motion trait, is generated as $M = U_M + \epsilon_m$, with $\epsilon_m \overset{iid}{\sim} N(0,0.1)$. The intra-subject nuisance measurements $\bH \in \tR^{3 \times T}$ are generated as $\bH \overset{iid}{\sim} MVN\left\{(U/3)\bone_3,\bSigma_H\right\}$ with $\bSigma_H = (\sigma_{H,j_1,j_2}){3 \times 3}$ and $\sigma_{H,j_1,j_2} = 0.3^{|j_1 - j_2|}$. 

We define $Y_1 = 0.1 + 0.6A + M - 0.6A \times M + 0.1 \bone^\top \bW + \epsilon_S$ for the pair of region 1 and region 0 and define $Y_2 = 0.1 + 0.2A - 0.4A \times M + 0.1 \bone^\top \bW + \epsilon_S$ for the pair of region 2 and region 0, where $\epsilon_S \overset{iid}{\sim} N(0,0.5)$. Under this design, the true natural direct effect equals 0.3 and 0, respectively. To generate intra-subject rs-fMRI responses contaminated by nuisance measurements, we first simulate the time series innovations at region 0 as $E_{0,t}\overset{iid}{\sim}N(0,1)$ for $t = 1,\ldots,T$. The innovation series $E_{1,t}$ and $E_{2,t}$ for the other two regions are then generated from standard normal distributions, with their covariances with $E_{0,t}$ set to $\tanh(Y_1)$ and $\tanh(Y_2)$, respectively. We collect these error series into $\bE \in \tR^{3 \times T}$. For these simulations, we define framewise displacement as $\text{FD}_t = |H_{1t}| + |H_{2t}| + |H_{3t}|$ for $t = 1,\ldots,T$. For the first time point, the observed intra-subject response is defined as $\bX_{1} = 0.6 H_{11} \times H_{31} - 0.8 H_{21} \times H_{31} + 0.8 \mathbb{I}(\text{FD}_1 > 2) \times H_{11} \times H_{21} + \bE_{1}$. For $t = 2,\ldots,T$, the response is generated as $\bX_{t} = \rho \bE_{t-1} + 0.6 H_{11} \times H_{31} - 0.8 H_{21} \times H_{31} + 0.8 \mathbb{I}(\text{FD}_1 > 2) \times H_{11} \times H_{21} + \sqrt{1-\rho^2} \bE_{t-1}$, where $\rho = 0.3$ controls the strength of temporal dependence.

We considered sample sizes $n = 150$ and $300$, with numbers of time points $T = 300$ and $600$. Each simulation setting was replicated 1000 times. For intra-subject processing, we compared three nuisance regression approaches. The first approach, denoted as ``12p", uses linear regression with 12 nuisance regressors. These include 3 main motion terms, 3 temporal derivatives defined as first differences, and their 6 quadratic terms. This method mimics the commonly used ``36p" strategy in rs-fMRI studies, which applies linear motion regression with 9 nuisance variables and their 27 expanded terms \citep{ciric2017benchmarking}. The second approach, termed ``12p+Scrubbing", removes time points with framewise displacement ($\text{FD}_t$) greater than 3 and then applies the 12p regression to the remaining data. A subject is excluded if more than 35\% of time points are removed. This approach combines linear nuisance regression with volume censoring and reflects the widely used scrubbing strategy in the literature \citep{ciric2017benchmarking}. The third approach, labeled ``SL", applies an ensemble machine learning method using Super Learner to model intra-subject responses based on 3 main motion terms and 3 temporal derivatives \citep{van2007super}. The Super Learner library includes the baseline mean model, the generalized linear model, and multivariate adaptive regression splines models.

\begin{figure}[t]
    \centering
    \includegraphics[width=0.9\linewidth]{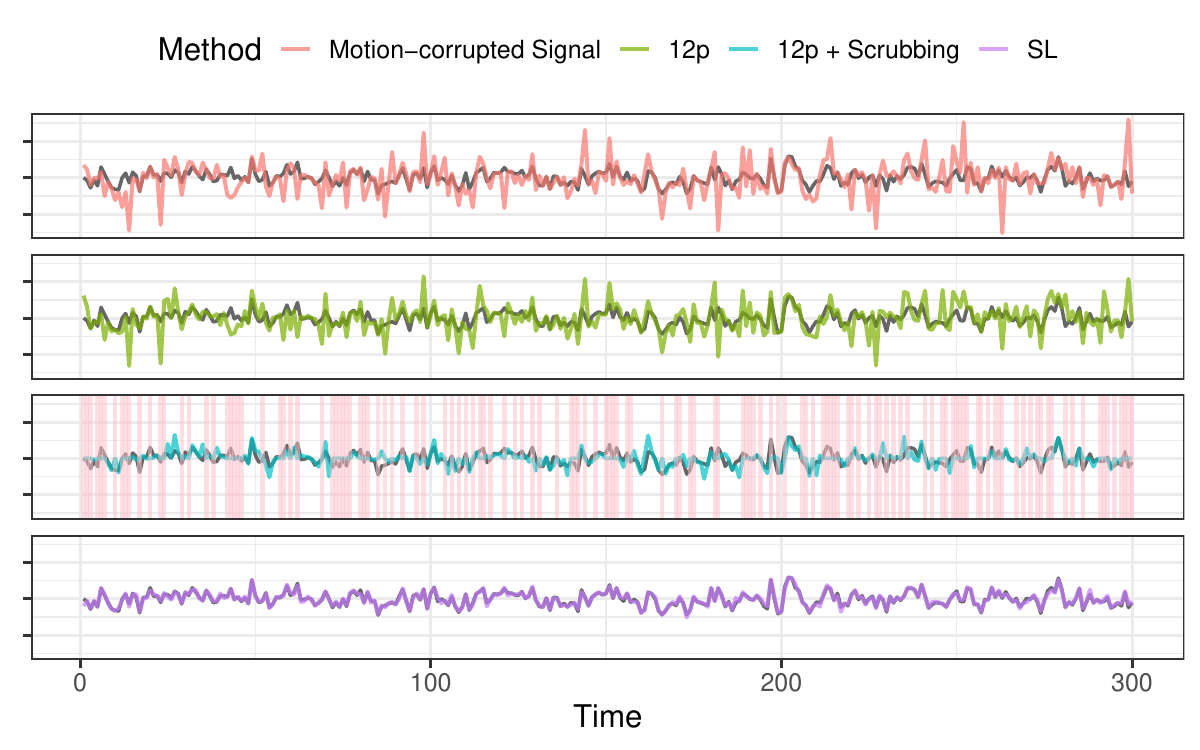}
    \caption{Example of intra-subject motion regression from a single simulation. The black line shows the simulated motion-free neural signal. In the first panel, the red line represents the motion-contaminated signal before correction. In the second panel, the green line (12p) shows residuals from linear regression with 3 main terms, 3 temporal derivatives, and 6 quadratic terms. In the third panel, the blue line (12p + Scrubbing) applies the same regression after removing high-motion time points, with censored points marked in pink. In the final panel, the purple line (SL) shows residuals obtained from Super Learner nuisance regression.}
    \label{fig:simu_results}
\end{figure}

We show an example of intra-subject processing from a single simulation replicate with $T = 300$ in \cref{fig:simu_results}. Three subject-level motion correction methods are compared. In each panel, the black line represents the simulated motion-free neural signal, which serves as the ground truth. In the first row, the red line shows the motion-contaminated signal before correction. The observed fMRI signal is strongly affected by head motion. In the second row, the green line labeled ``12p" shows residuals from linear regression using the 12p method. This approach reduces motion effects but leaves many motion spikes. In the third row, the blue line labeled ``12p + Scrubbing" further removes high-motion time points. The censored points are shown in pink. This step reduces motion spikes but removes many time points, leading to missing data at scrubbed locations. In the final row, the purple line labeled ``SL" shows residuals from Super Learner based machine learning nuisance regression. This method removes motion contamination more effectively and better recovers the underlying neural signal, with residuals closely matching the motion-free ground truth.

\begin{table}[t]
\centering
\caption{\label{tab:simulation_results}Simulation results based on 1000 repetitions comparing bias, variance, MSE, power, and type-I error across methods.}
\centering
\resizebox{\ifdim\width>\linewidth\linewidth\else\width\fi}{!}{
\fontsize{9}{11}\selectfont
\begin{tabular}[t]{cc l cccc cccccc l cccc cccccc l cccc cccccc l cccc cccccc l cccc cccccc l cccc cccccc l cccc cccccc l cccc cccccc l cccc cccccc l cccc cccccc l cccc cccc}
\toprule
\multicolumn{3}{c}{ } & \multicolumn{4}{c}{Region 1: Truth = 0.3} & \multicolumn{4}{c}{Region 2: Truth = 0} \\
\cmidrule(l{3pt}r{3pt}){4-7} \cmidrule(l{3pt}r{3pt}){8-11}
$n$ & $T$ & Method & Bias & SD & MSE & Power & Bias & SD & MSE & Type-I Error\\
\midrule
\multirow{5}{*}{150} &
\multirow{5}{*}{300} &
12p+Linear & 0.096 & \textbf{0.091} & \textbf{0.017} & \textbf{0.987} & -0.129 &\textbf{ 0.060} & 0.020 & 0.561\\
& & 12p+Linear M & -0.188 & 0.099 & 0.045 & 0.254 & -0.076 & 0.074 & \textbf{0.011} & 0.216\\
& & 12p Scrub+Linear & 0.115 & 0.100 & 0.023 & 0.984 & -0.122 & 0.077 & 0.021 & 0.386\\
& & 12p Scrub+Linear M & -0.162 & 0.106 & 0.037 & 0.300 & -0.082 & 0.088 & 0.015 & 0.172\\
& & SL+AIPW & \textbf{-0.040} & 0.151 & 0.024 & 0.489 & \textbf{-0.001} & 0.143 & 0.020 & \textbf{0.062}\\
\addlinespace
\multirow{5}{*}{150} &
\multirow{5}{*}{600} &
12p+Linear & 0.093 & \textbf{0.090} & \textbf{0.017} & \textbf{0.991} & -0.130 & \textbf{0.060} & 0.020 & 0.590\\
& & 12p+Linear M & -0.190 & 0.098 & 0.046 & 0.252 & -0.077 & 0.073 & \textbf{0.011} & 0.211\\
& & 12p Scrub+Linear & 0.107 & 0.101 & 0.021 & 0.980 & -0.125 & 0.076 & 0.021 & 0.406\\
& & 12p Scrub+Linear M & -0.168 & 0.108 & 0.040 & 0.286 & -0.086 & 0.089 & 0.015 & 0.190\\
& & SL+AIPW & \textbf{-0.028} & 0.165 & 0.028 & 0.491 & \textbf{-0.004} & 0.156 & 0.024 & \textbf{0.082}\\
\addlinespace
\multirow{5}{*}{300} &
\multirow{5}{*}{300} &
12p+Linear & 0.100 & \textbf{0.064} & 0.014 & \textbf{1.000} & -0.127 & \textbf{0.044} & 0.018 & 0.847\\
& & 12p+Linear M & -0.185 & 0.069 & 0.039 & 0.438 & -0.073 & 0.051 & \textbf{0.008} & 0.328\\
& & 12p Scrub+Linear & 0.118 & 0.071 & 0.019 & 1.000 & -0.120 & 0.054 & 0.017 & 0.601\\
& & 12p Scrub+Linear M & -0.159 & 0.075 & 0.031 & 0.515 & -0.080 & 0.062 & 0.010 & 0.268\\
& & SL+AIPW & \textbf{-0.049} & 0.093 & \textbf{0.011} & 0.813 & \textbf{-0.004} & 0.095 & 0.009 & \textbf{0.059}\\
\addlinespace
\multirow{5}{*}{300} &
\multirow{5}{*}{600} &
12p+Linear & 0.100 & \textbf{0.064} & 0.014 & \textbf{1.000} & -0.125 & 0.042 & 0.017 & 0.851\\
& & 12p+Linear M & -0.184 & 0.068 & 0.038 & 0.466 & -0.072 & 0.050 & \textbf{0.008} & 0.326\\
& & 12p Scrub+Linear & 0.115 & 0.071 & 0.018 & 1.000 & -0.119 & 0.052 & 0.017 & 0.638\\
& & 12p Scrub+Linear M & -0.162 & 0.073 & 0.031 & 0.535 & -0.081 & 0.060 & 0.010 & 0.291\\
& & SL+AIPW & \textbf{-0.028} & 0.103 & \textbf{0.011} & 0.823 & \textbf{0.001} & 0.101 & 0.010 & \textbf{0.066}\\
\bottomrule
\end{tabular}}
\end{table}

After intra-subject processing, FC is estimated by computing the Fisher z-transformed Pearson correlation between residual time series from two brain regions. At the inter-subject level, we applied three approaches. The first approach estimates treatment effects using linear regression with treatment and confounders $\bW$ (hereafter, ``Linear"). The second approach adds the motion variable $M$ and estimates the direct effect of A assuming a linear model (hereafter, ``Linear M"). The third approach is our proposed AIPW estimator, which treats motion as a mediator and uses Super Learner to estimate propensity scores and outcome models with cross-fitting (herafter, ``AIPW"). We evaluated five methods that combine intra-subject and inter-subject strategies: 12p + Linear, 12p + Linear M, 12p Scrubbing + Linear, 12p Scrubbing + Linear M, and SL + AIPW. We evaluated performance using bias, variance, and mean squared error (MSE). Power was assessed based on region 1, and type-I error was assessed based on region 2.

\cref{tab:simulation_results} summarizes results from 1000 simulations across different sample sizes and numbers of time points. Methods based on linear inter-subject models without accounting for motion mediation, including 12p + Linear and 12p Scrub + Linear, achieve low variance and high statistical power but suffer from substantial bias, which leads to severe inflation of the type-I error rate. Although the 12p + Linear method attains the smallest MSE at smaller sample sizes due to its low variance, its MSE exceeds that of the proposed SL + AIPW estimator when the sample size increases to $n = 300$. Including motion in the linear inter-subject models reduces bias and type-I error, but still fails to achieve valid error control and results in a drastic loss of power. In contrast, the SL + AIPW estimator shows small bias, low MSE, and stable performance across all settings. It achieves type-I error rates closest to the nominal level while maintaining reasonable power, exceeding 0.8 when $n = 300$, with improvements as both sample size and number of time points increase. These results highlight the need to jointly model intra-subject processing and inter-subject relationships using flexible methods to ensure valid causal inference.

\section{Application}
\label{sec:application}

We analyzed rs-fMRI data from autism children selected from the Autism Brain Imaging Data Exchange (ABIDE) dataset \citep{DiMartino2013TheAutism,di2017enhancing}. There were 128 ASD children (104 males) aged 8-13 in our study sample. Each participant has one T1w anatomical scan and an rs-fMRI scan with length varying from 5.5 to 6.5 minutes. The raw imaging data were preprocessed using a cortical surface fMRI pipeline with \texttt{fMRIPrep} \citep{esteban2019fmriprep}. We defined the treatment as any reported stimulant medications (see the Web Supplement Section C.2 for details). Among them, 101 ASD children (84 males) did not take stimulant medications, and 27 ASD children (20 males) took stimulant medications. Children were instructed to not take stimulants the day prior to and the day of the scan. Dosages and duration of stimulant treatment are heterogeneous. Our goal is to examine possible long-term impacts of such stimulant medications. Details on the dataset and preprocessing are in the Web Supplement Section C.

Within each subject, the mean signals from nuisance tissue compartments, including the global signal, white matter, and cerebrospinal fluid, as well as three translational and three rotational motion parameters were calculated by \texttt{fMRIprep} and used as our intra-subject nuisance measurements. The set of these nine variables, their first-order temporal derivatives, and squared terms constitutes the ``36p" model to be removed from raw intra-subject rs-fMRI data \citep{ciric2017benchmarking}, which were used in the 36p and linear model approach (see below). For the 36p with scrubbing and participant removal approach, we applied scrubbing in which time points with framewise displacement greater than 0.2 mm were removed, and the entire dataset for a subject was excluded if less than five minutes of fMRI data remained after scrubbing \citep{power2014methods}. In our proposed framework, we performed adaptive intra-subject processing without excluding participants using the \texttt{SuperLearner} package in \texttt{R} on the rs-fMRI signals using the nine nuisance variables (three tissue signals and six motion parameters) along with their temporal derivatives. The Super Learner models included the mean model, generalized linear model, LASSO, ridge regression, elastic net, random forest, and multivariate adaptive regression splines \citep{SuperLearner_rpackage}. We conducted intra-subject processing on the average residual time series for regions of interest (ROIs) defined by Schaefer's 100-node cortical parcellation combined with 19 subcortical regions \citep{schaefer2018local}. We calculated pairwise Pearson correlations between the intra-subject processed residuals from all ROIs and performed Fisher Z-transformation to obtain the intra-subject processed derived FC estimates. We applied \texttt{ComBat} for site harmonization \citep{yu2018statistical} in which ``site" was a factor with three levels (NYU, KKI-8 channel, KKI-32 channel, see the Web Supplement Section C.1) with the following covariates: age, sex, full scale intelligence quotient, handedness, Autism Diagnostic Observation Schedule score, and mean framewise displacement.

In the inter-subject analysis, we included age, sex, full scale intelligence quotient, handedness, and Autism Diagnostic Observation Schedule score as common confounders and mean FD as a mediator. For our proposed method, inter-subject models $\hat{\pi}_A(a|\bw)$, $\hat{\pi}_A(a|m,\bw)$, and $\hat{b}_j(m,a,\bw)$ were also estimated using \texttt{SuperLearner} with the same set of learners as in the intra-subject processing step and 5-fold cross-fitting. We considered two classical approaches for comparison. The first approach used fMRI residuals from 36p linear nuisance regression and estimated treatment effects using a linear model without adjusting for mean FD (``36p + Linear"). The second approach also used linear 36p but with scrubbing, excluded subjects with $<5$ minutes of data remaining after scrubbing, and included mean FD in the inter-subject linear model (``36p Scrub + Linear M"). Notably, after applying scrubbing and subject exclusion, 96 children were removed from the analysis, leaving only 32 participants, of whom 6 were treated with stimulants. For multiple testing correction, the Benjamini–Hochberg procedure was applied to the results from the two classical methods to control the false discovery rate (FDR) at 0.1. For our proposed method, we controlled the exceedance rate of FDP at $P(\text{FDP} > 0.1) \le 0.1$.

\begin{figure}[t]
    \centering
    \includegraphics[width=\linewidth]{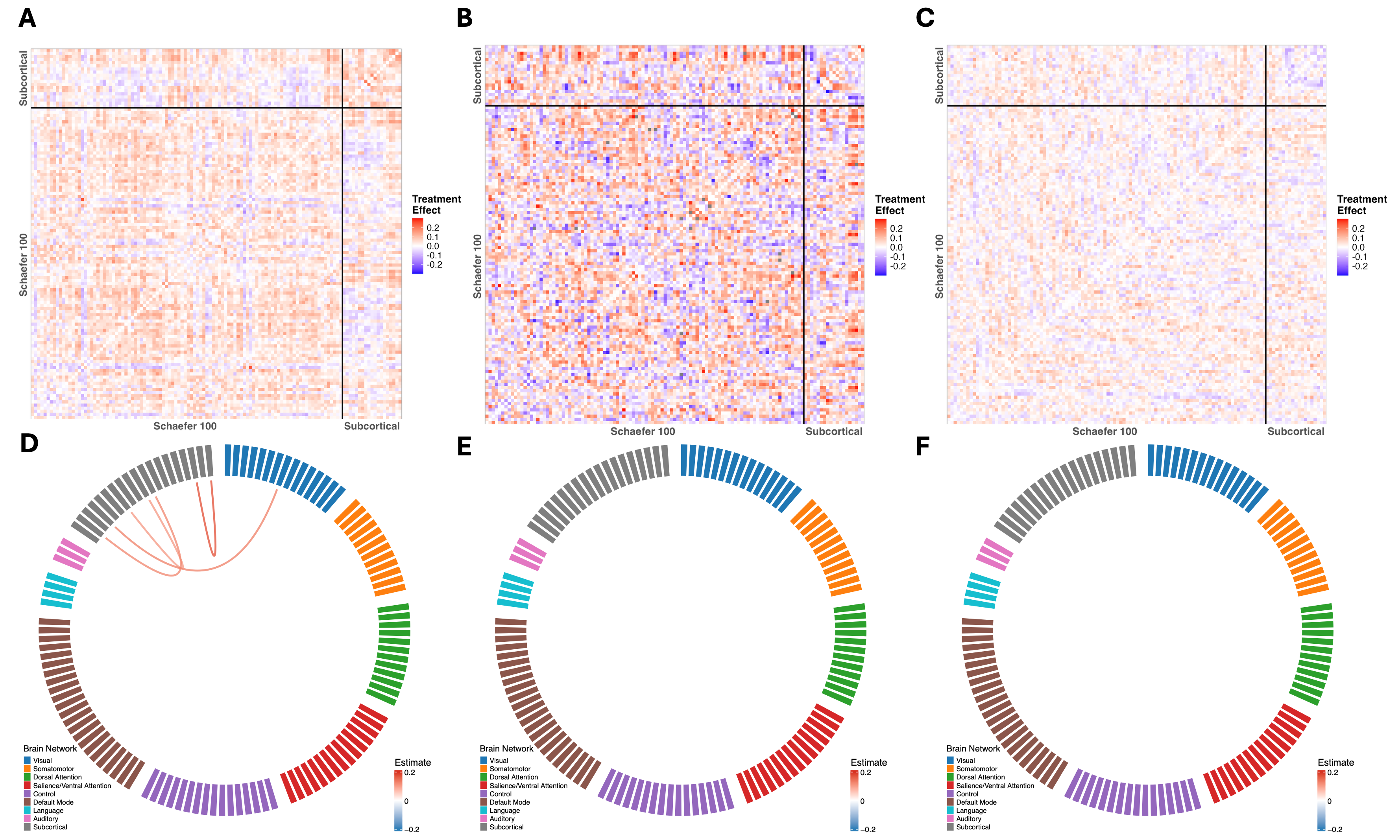}
    \caption{The stimulant treatment effects on FC between Scheafer 100 and 19 MNI subcortical brain regions using ``36p + Linear" (\textbf{A}), ``36p Scrub + Linear M" (\textbf{B}), and our proposed method (\textbf{C}). \textbf{D, E, F}: The circle plots of significant treatment effects estimations from three methods.}
    \label{fig:res_FC}
\end{figure}

\begin{figure}[t]
    \centering
    \includegraphics[width=\linewidth]{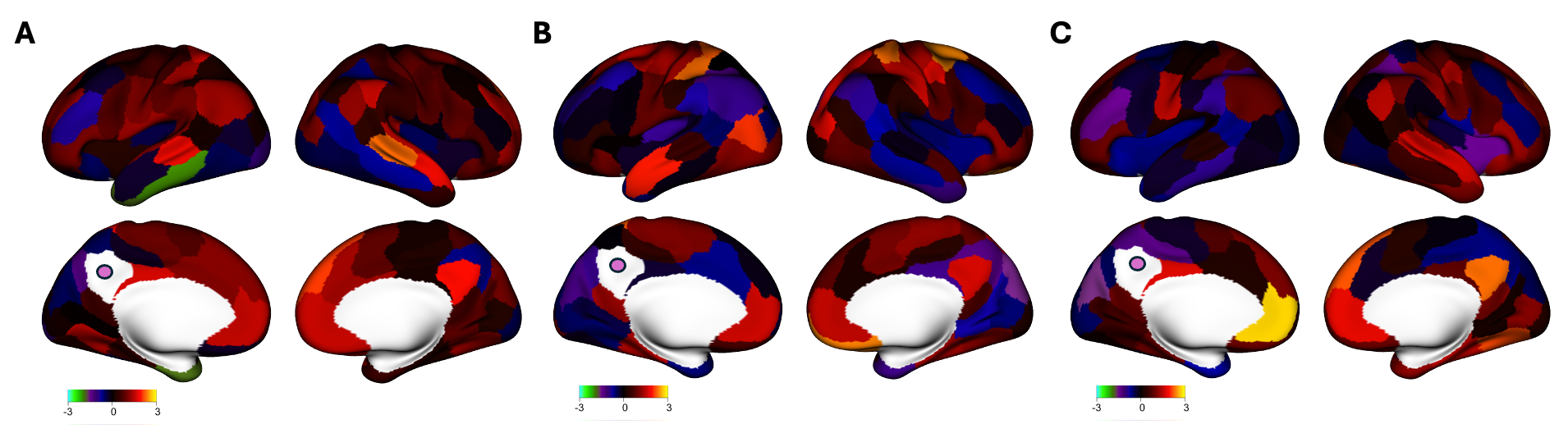}
    \caption{The brain map shows the Z-statistics for stimulant treatment effects on the FC between the seed region ``17networks\_LH\_DefaultA\_pCunPCC\_1" and all other regions in the Schaefer-100 parcellation, estimated using ``36p + Linear" (\textbf{A}), ``36p + Linear M" (\textbf{B}), and our proposed method (\textbf{C}).}
    \label{fig:res_seed3}
\end{figure}

\begin{table}
\centering
\small
\caption{The pairs of regions in which the uncorrected 95\% confidence interval does not include zero.}
\label{tab:nde-nie-ate}
\renewcommand{\arraystretch}{1.25}
\resizebox{\textwidth}{!}{%
\begin{tabular}{ccccc}
\toprule
\textbf{ROI 1} & \textbf{ROI 2} & \textbf{NDE (95\% CI)} & \textbf{NIE (95\% CI)} & \textbf{ATE (95\% CI)} \\
\midrule
\multirow{4}{*}{\centering pCunPCC1.L (Default.A)} 
& PFCm1.L (Default.A) & 0.134 (0.037, 0.230) & -0.048 (-0.102, 0.006) & 0.086 (0.004, 0.167) \\
& pCunPCC1.R (Default.A) & 0.133 (0.015, 0.250) & -0.043 (-0.093, 0.006) & 0.089 (-0.004, 0.182) \\
& PFCd1.R (Default.B) & 0.095 (0.006, 0.184) & -0.002 (-0.019, 0.015) & 0.093 (0.007, 0.180) \\
& ExStr3.R (Visual.A) & 0.080 (0.002, 0.157) & 0.001 (-0.023, 0.024) & 0.080 (0.008, 0.152) \\
\bottomrule
\end{tabular}
}
\end{table}

The estimated treatment effects of stimulant medications on FC are presented in \cref{fig:res_FC}. ``36p Scrub + Linear M" estimates are more variable, including larger treatment effects in many pairs of regions. This may be because 75\% of participants were removed in scrubbing step leaving only 6 treated participants, resulting in greater inaccuracies for some estimates. Out of the 7021 functional connectivity (FC) pairs analyzed, 988 (14.07\%), 370 (5.27\%), and 347 (4.94\%) estimates were marginally significant at the 0.05 level for the two classical approaches and our proposed method, respectively. Notably, there are far fewer p-values$<$0.05 when applying scrubbing with participant removal or our proposed method. This suggests that our approach greatly reduces motion artifacts relative to the ``36p + Linear" model. After applying multiple testing correction controlling the FDR at 0.1 for the classical methods and $P(\text{FDP} > 0.1) \le 0.1$ for our method, only four FC pairs from the ``36p + Linear" model remained significant, while no significant effects were identified for the ``36p Scrub + Linear M" approach or our proposed method. It is unclear whether the lack of significant findings is due to small-to-no long-term effects of stimulants on brain connectivity, heterogeneity in stimulant dosage and treatment duration within the sample, small sample size, and/or other factors. 

After this primary analysis, we conducted an exploratory post-hoc examination of the portion of the FC matrix corresponding to connections with the region ``17networks LH DefaultA pCunPCC 1", which is a hub of the posterior default mode network, and visualized the results with the \texttt{R} package \texttt{ciftiTools} \citep{pham2022ciftitools}. This seed region is prone to motion artifacts \citep{satterthwaite2012impact}. It is also thought to have reduced connectivity with anterior portions of the default mode network in autism \citep{DiMartino2013TheAutism}. The brain map of Z-statistics for stimulant treatment effects on the FC between this seed and all other regions in the Schaefer-100 parcellation is shown in \cref{fig:res_seed3}. In this post-hoc analysis, we examined uncorrected p-values.

The four regions with the most marginally significant NDE estimates, together with their ATE and NIE estimates using our proposed framework, are summarized in \cref{tab:nde-nie-ate}. The results indicate that indirect effects through motion tended to reduce the overall treatment effect, highlighting the need to account for motion in causal analysis. This is because motion was higher in the treated group, which attenuated the increase in FC from stimulant treatment in the ATE. Our method indicates an increase in the FC between the posterior default mode network seed region and the anterior default mode network, raising the possibility that stimulants may have normalizing effects in these regions. However, these findings were not statistically significant in our primary analysis adjusting for multiple comparisons and should be interpreted cautiously due to the limited sample size. Future studies with larger samples and less heterogeneity in treatment are needed to investigate these patterns.

\section{Discussion}
\label{sec:discussion}

In this study, we proposed a semiparametric derived outcome framework for causal inference with intra-subject processing followed by inter-subject analysis, a setting that is common in biomedical research. The intra-subject processing functions are unknown and must be estimated from intra-subject measurements. The subject-level outcomes are unobservable and are approximated using derived outcomes based on processed residuals. Our theoretical results show that valid inference with the proposed multiply robust estimator requires rate conditions on both intra-subject and inter-subject models. When the derived outcome has a cross-product structure, the intra-subject processing functions are allowed to converge at slower rates as the number of time points increases, relative to the number of subjects. This property allows flexible statistical and machine learning methods to be used for nuisance model estimation at both levels. We also adopt a multiple testing procedure to control the false discovery proportion when many effects are tested. In both simulation studies and the rs-fMRI application, we use Super Learner to estimate intra-subject and inter-subject nuisance models. We highlight that the proposed framework supports the use of adaptive learning methods while maintaining valid causal inference.

The proposed framework has several limitations and suggests directions for future work. First, the multiple robustness property holds at the inter-subject level but not at the intra-subject level. If the intra-subject processing function is not consistently estimated, the proposed estimator can be biased. This motivates the use of data-adaptive methods to reduce model misspecification. When more intra-subject data are available and stronger prior knowledge about the data structure exists, it may be possible to construct multiply robust derived outcomes to improve robustness. Second, constraints on the parameter space of subject-level outcomes can complicate modeling. In the rs-fMRI example, functional connectivity contains a positive definite correlation matrix. We apply a Fisher-Z transformation and model each connectivity value separately without considering positive definiteness. Dependence across the high-dimensional estimators is then addressed using a data-driven step-down multiple testing procedure based on estimated influence functions. Developing more powerful multiple testing methods that directly model dependence and/or leverage possible low rank structure remains an important topic for future research. Finally, the framework can be extended to study heterogeneity across groups or over time. Treatment effects within subgroups defined by covariates may provide additional insights. Longitudinal data may also allow analysis of how treatment effects evolve over time. As larger and more complex datasets become available, extending the framework to such settings will be an important direction for future work.

\setstretch{1.75}
\setlength{\bibsep}{2pt}
\bibliographystyle{agsm}
\bibliography{Bibliography-MM-MC}

@article{ciric2017benchmarking,
  title={Benchmarking of participant-level confound regression strategies for the control of motion artifact in studies of functional connectivity},
  author={Ciric, Rastko and Wolf, Daniel H and Power, Jonathan D and Roalf, David R and Baum, Graham L and Ruparel, Kosha and Shinohara, Russell T and Elliott, Mark A and Eickhoff, Simon B and Davatzikos, Christos and others},
  journal={Neuroimage},
  volume={154},
  pages={174--187},
  year={2017},
  publisher={Elsevier}
}

@article{cox1996afni,
  title={AFNI: software for analysis and visualization of functional magnetic resonance neuroimages},
  author={Cox, Robert W},
  journal={Computers and Biomedical research},
  volume={29},
  number={3},
  pages={162--173},
  year={1996},
  publisher={Elsevier}
}

@article{power2014methods,
  title={Methods to detect, characterize, and remove motion artifact in resting state fMRI},
  author={Power, Jonathan D and Mitra, Anish and Laumann, Timothy O and Snyder, Abraham Z and Schlaggar, Bradley L and Petersen, Steven E},
  journal={Neuroimage},
  volume={84},
  pages={320--341},
  year={2014},
  publisher={Elsevier}
}

@article{qiu2023unveiling,
  title={Unveiling the Unobservable: Causal Inference on Multiple Derived Outcomes},
  author={Qiu, Yumou and Sun, Jiarui and Zhou, Xiao-Hua},
  journal={Journal of the American Statistical Association},
  pages={1--12},
  year={2023},
  publisher={Taylor \& Francis}
}

@article{nebel2022accounting,
  title={Accounting for motion in resting-state fMRI: What part of the spectrum are we characterizing in autism spectrum disorder?},
  author={Nebel, Mary Beth and Lidstone, Daniel E and Wang, Liwei and others},
  journal={NeuroImage},
  volume={257},
  pages={119296},
  year={2022},
  publisher={Elsevier}
}

@article{marek2022reproducible,
  title={Reproducible brain-wide association studies require thousands of individuals},
  author={Marek, Scott and Tervo-Clemmens, Brenden and Calabro, Finnegan J and others},
  journal={Nature},
  volume={603},
  number={7902},
  pages={654--660},
  year={2022},
  publisher={Nature Publishing Group}
}

@article{breiman1996stacked,
  title={Stacked regressions},
  author={Breiman, Leo},
  journal={Machine learning},
  volume={24},
  pages={49--64},
  year={1996},
  publisher={Springer}
}

@article{van2007super,
  title={Super learner},
  author={Van der Laan, Mark J and Polley, Eric C and Hubbard, Alan E},
  journal={Statistical applications in genetics and molecular biology},
  volume={6},
  number={1},
  year={2007},
  publisher={De Gruyter}
}

@article{chernozhukov2018double,
  title={Double/debiased machine learning for treatment and structural parameters},
  author={Chernozhukov, Victor and Chetverikov, Denis and Demirer, Mert and others},
  journal={The Econometrics Journal},
  volume={21},
  number={1},
  pages={C1-C68},
  year={2018},
  publisher={Oxford University Press Oxford, UK}
}

@Manual{SuperLearner_rpackage,
    title = {SuperLearner: Super Learner Prediction},
    author = {Eric Polley and Erin LeDell and Chris Kennedy and Mark {van der Laan}},
    year = {2023},
    note = {R package version 2.0-28.1}
  }

@article{di2017enhancing,
  title={Enhancing studies of the connectome in autism using the autism brain imaging data exchange II},
  author={Di Martino, Adriana and O’connor, David and Chen, Bosi and Alaerts, Kaat and Anderson, Jeffrey S and Assaf and others},
  journal={Scientific data},
  volume={4},
  number={1},
  pages={1--15},
  year={2017},
  publisher={Nature Publishing Group}
}

@article{esteban2019fmriprep,
  title={fMRIPrep: a robust preprocessing pipeline for functional MRI},
  author={Esteban, Oscar and Markiewicz, Christopher J and Blair, Ross W and Moodie, Craig A and Isik, A and others},
  journal={Nature methods},
  volume={16},
  number={1},
  pages={111--116},
  year={2019},
  publisher={Nature Publishing Group US New York}
}

@article{schaefer2018local,
  title={Local-global parcellation of the human cerebral cortex from intrinsic functional connectivity MRI},
  author={Schaefer, Alexander and Kong, Ru and Gordon, Evan M and Laumann, Timothy O and Zuo, Xi-Nian and others},
  journal={Cerebral cortex},
  volume={28},
  number={9},
  pages={3095--3114},
  year={2018},
  publisher={Oxford University Press}
}

@article{yu2018statistical,
  title={Statistical harmonization corrects site effects in functional connectivity measurements from multi-site fMRI data},
  author={Yu, Meichen and Linn, Kristin A and Cook, Philip A and others},
  journal={Human brain mapping},
  volume={39},
  number={11},
  pages={4213--4227},
  year={2018},
  publisher={Wiley Online Library}
}

@article{pham2022ciftitools,
  title={ciftitools: A package for reading, writing, visualizing, and manipulating cifti files in r},
  author={Pham, Damon D and Muschelli, John and Mejia, Amanda F},
  journal={NeuroImage},
  volume={250},
  pages={118877},
  year={2022},
  publisher={Elsevier}
}

@article{fmriprep1,
    author = {Esteban, Oscar and Markiewicz, Christopher and Blair, Ross W and Moodie, Craig and Isik, Ayse Ilkay and Erramuzpe Aliaga, Asier and Kent, James and Goncalves, Mathias and DuPre, Elizabeth and Snyder, Madeleine and Oya, Hiroyuki and Ghosh, Satrajit and Wright, Jessey and Durnez, Joke and Poldrack, Russell and Gorgolewski, Krzysztof Jacek},
    title = {{fMRIPrep}: a robust preprocessing pipeline for functional {MRI}},
    year = {2018},
    doi = {10.1038/s41592-018-0235-4},
    journal = {Nature Methods}
}

@article{fmriprep2,
    author = {Esteban, Oscar and Blair, Ross and Markiewicz, Christopher J. and Berleant, Shoshana L. and Moodie, Craig and Ma, Feilong and Isik, Ayse Ilkay and Erramuzpe, Asier and Kent, James D. andGoncalves, Mathias and DuPre, Elizabeth and Sitek, Kevin R. and Gomez, Daniel E. P. and Lurie, Daniel J. and Ye, Zhifang and Poldrack, Russell A. and Gorgolewski, Krzysztof J.},
    title = {fMRIPrep},
    year = 2018,
    doi = {10.5281/zenodo.852659},
    publisher = {Zenodo},
    journal = {Software}
}

@article{nipype1,
    author = {Gorgolewski, K. and Burns, C. D. and Madison, C. and Clark, D. and Halchenko, Y. O. and Waskom, M. L. and Ghosh, S.},
    doi = {10.3389/fninf.2011.00013},
    journal = {Frontiers in Neuroinformatics},
    pages = 13,
    shorttitle = {Nipype},
    title = {Nipype: a flexible, lightweight and extensible neuroimaging data processing framework in Python},
    volume = 5,
    year = 2011
}

@article{nipype2,
    author = {Gorgolewski, Krzysztof J. and Esteban, Oscar and Markiewicz, Christopher J. and Ziegler, Erik and Ellis, David Gage and Notter, Michael Philipp and Jarecka, Dorota and Johnson, Hans and Burns, Christopher and Manhães-Savio, Alexandre and Hamalainen, Carlo and Yvernault, Benjamin and Salo, Taylor and Jordan, Kesshi and Goncalves, Mathias and Waskom, Michael and Clark, Daniel and Wong, Jason and Loney, Fred and Modat, Marc and Dewey, Blake E and Madison, Cindee and Visconti di Oleggio Castello, Matteo and Clark, Michael G. and Dayan, Michael and Clark, Dav and Keshavan, Anisha and Pinsard, Basile and Gramfort, Alexandre and Berleant, Shoshana and Nielson, Dylan M. and Bougacha, Salma and Varoquaux, Gael and Cipollini, Ben and Markello, Ross and Rokem, Ariel and Moloney, Brendan and Halchenko, Yaroslav O. and Wassermann , Demian and Hanke, Michael and Horea, Christian and Kaczmarzyk, Jakub and Gilles de Hollander and DuPre, Elizabeth and Gillman, Ashley and Mordom, David and Buchanan, Colin and Tungaraza, Rosalia and Pauli, Wolfgang M. and Iqbal, Shariq and Sikka, Sharad and Mancini, Matteo and Schwartz, Yannick and Malone, Ian B. and Dubois, Mathieu and Frohlich, Caroline and Welch, David and Forbes, Jessica and Kent, James and Watanabe, Aimi and Cumba, Chad and Huntenburg, Julia M. and Kastman, Erik and Nichols, B. Nolan and Eshaghi, Arman and Ginsburg, Daniel and Schaefer, Alexander and Acland, Benjamin and Giavasis, Steven and Kleesiek, Jens and Erickson, Drew and Küttner, René and Haselgrove, Christian and Correa, Carlos and Ghayoor, Ali and Liem, Franz and Millman, Jarrod and Haehn, Daniel and Lai, Jeff and Zhou, Dale and Blair, Ross and Glatard, Tristan and Renfro, Mandy and Liu, Siqi and Kahn, Ari E. and Pérez-García, Fernando and Triplett, William and Lampe, Leonie and Stadler, Jörg and Kong, Xiang-Zhen and Hallquist, Michael and Chetverikov, Andrey and Salvatore, John and Park, Anne and Poldrack, Russell and Craddock, R. Cameron and Inati, Souheil and Hinds, Oliver and Cooper, Gavin and Perkins, L. Nathan and Marina, Ana and Mattfeld, Aaron and Noel, Maxime and Lukas Snoek and Matsubara, K and Cheung, Brian and Rothmei, Simon and Urchs, Sebastian and Durnez, Joke and Mertz, Fred and Geisler, Daniel and Floren, Andrew and Gerhard, Stephan and Sharp, Paul and Molina-Romero, Miguel and Weinstein, Alejandro and Broderick, William and Saase, Victor and Andberg, Sami Kristian and Harms, Robbert and Schlamp, Kai and Arias, Jaime and Papadopoulos Orfanos, Dimitri and Tarbert, Claire and Tambini, Arielle and De La Vega, Alejandro and Nickson, Thomas and Brett, Matthew and Falkiewicz, Marcel and Podranski, Kornelius and Linkersdörfer, Janosch and Flandin, Guillaume and Ort, Eduard and Shachnev, Dmitry and McNamee, Daniel and Davison, Andrew and Varada, Jan and Schwabacher, Isaac and Pellman, John and Perez-Guevara, Martin and Khanuja, Ranjeet and Pannetier, Nicolas and McDermottroe, Conor and Ghosh, Satrajit},
    title = {Nipype},
    year = 2018,
    doi = {10.5281/zenodo.596855},
    publisher = {Zenodo},
    journal = {Software}
}

@article{n4,
    author = {Tustison, N. J. and Avants, B. B. and Cook, P. A. and Zheng, Y. and Egan, A. and Yushkevich, P. A. and Gee, J. C.},
    doi = {10.1109/TMI.2010.2046908},
    issn = {0278-0062},
    journal = {IEEE Transactions on Medical Imaging},
    number = 6,
    pages = {1310-1320},
    shorttitle = {N4ITK},
    title = {N4ITK: Improved N3 Bias Correction},
    volume = 29,
    year = 2010
}

@article{fs_reconall,
    author = {Dale, Anders M. and Fischl, Bruce and Sereno, Martin I.},
    doi = {10.1006/nimg.1998.0395},
    issn = {1053-8119},
    journal = {NeuroImage},
    number = 2,
    pages = {179-194},
    shorttitle = {Cortical Surface-Based Analysis},
    title = {Cortical Surface-Based Analysis: I. Segmentation and Surface Reconstruction},
    url = {http://www.sciencedirect.com/science/article/pii/S1053811998903950},
    volume = 9,
    year = 1999
}

@article{mindboggle,
    author = {Klein, Arno and Ghosh, Satrajit S. and Bao, Forrest S. and Giard, Joachim and Häme, Yrjö and Stavsky, Eliezer and Lee, Noah and Rossa, Brian and Reuter, Martin and Neto, Elias Chaibub and Keshavan, Anisha},
    doi = {10.1371/journal.pcbi.1005350},
    issn = {1553-7358},
    journal = {PLOS Computational Biology},
    number = 2,
    pages = {e1005350},
    title = {Mindboggling morphometry of human brains},
    url = {http://journals.plos.org/ploscompbiol/article?id=10.1371/journal.pcbi.1005350},
    volume = 13,
    year = 2017
}

@article{mni152nlin2009casym,
    title = {Unbiased nonlinear average age-appropriate brain templates from birth to adulthood},
    author = {Fonov, VS and Evans, AC and McKinstry, RC and Almli, CR and Collins, DL},
    doi = {10.1016/S1053-8119(09)70884-5},
    journal = {NeuroImage},
    pages = {S102},
    volume = {47, Supplement 1},
    year = 2009
}

@article{mni152nlin6asym,
    author = {Evans, AC and Janke, AL and Collins, DL and Baillet, S},
    title = {Brain templates and atlases},
    doi = {10.1016/j.neuroimage.2012.01.024},
    journal = {NeuroImage},
    volume = {62},
    number = {2},
    pages = {911--922},
    year = 2012
}

@article{ants,
    author = {Avants, B.B. and Epstein, C.L. and Grossman, M. and Gee, J.C.},
    doi = {10.1016/j.media.2007.06.004},
    issn = {1361-8415},
    journal = {Medical Image Analysis},
    number = 1,
    pages = {26-41},
    shorttitle = {Symmetric diffeomorphic image registration with cross-correlation},
    title = {Symmetric diffeomorphic image registration with cross-correlation: Evaluating automated labeling of elderly and neurodegenerative brain},
    url = {http://www.sciencedirect.com/science/article/pii/S1361841507000606},
    volume = 12,
    year = 2008
}

@article{fsl_fast,
    author = {Zhang, Y. and Brady, M. and Smith, S.},
    doi = {10.1109/42.906424},
    issn = {0278-0062},
    journal = {IEEE Transactions on Medical Imaging},
    number = 1,
    pages = {45-57},
    title = {Segmentation of brain {MR} images through a hidden Markov random field model and the expectation-maximization algorithm},
    volume = 20,
    year = 2001
}

@article{mcflirt,
    author = {Jenkinson, Mark and Bannister, Peter and Brady, Michael and Smith, Stephen},
    doi = {10.1006/nimg.2002.1132},
    issn = {1053-8119},
    journal = {NeuroImage},
    number = 2,
    pages = {825-841},
    title = {Improved Optimization for the Robust and Accurate Linear Registration and Motion Correction of Brain Images},
    url = {http://www.sciencedirect.com/science/article/pii/S1053811902911328},
    volume = 17,
    year = 2002
}

@article{bbr,
    author = {Greve, Douglas N and Fischl, Bruce},
    doi = {10.1016/j.neuroimage.2009.06.060},
    issn = {1095-9572},
    journal = {NeuroImage},
    number = 1,
    pages = {63-72},
    title = {Accurate and robust brain image alignment using boundary-based registration},
    volume = 48,
    year = 2009
}

@article{power_fd_dvars,
    author = {Power, Jonathan D. and Mitra, Anish and Laumann, Timothy O. and Snyder, Abraham Z. and Schlaggar, Bradley L. and Petersen, Steven E.},
    doi = {10.1016/j.neuroimage.2013.08.048},
    issn = {1053-8119},
    journal = {NeuroImage},
    number = {Supplement C},
    pages = {320-341},
    title = {Methods to detect, characterize, and remove motion artifact in resting state fMRI},
    url = {http://www.sciencedirect.com/science/article/pii/S1053811913009117},
    volume = 84,
    year = 2014
}

@article{nilearn,
    author = {Abraham, Alexandre and Pedregosa, Fabian and Eickenberg, Michael and Gervais, Philippe and Mueller, Andreas and Kossaifi, Jean and Gramfort, Alexandre and Thirion, Bertrand and Varoquaux, Gael},
    doi = {10.3389/fninf.2014.00014},
    issn = {1662-5196},
    journal = {Frontiers in Neuroinformatics},
    language = {English},
    title = {Machine learning for neuroimaging with scikit-learn},
    url = {https://www.frontiersin.org/articles/10.3389/fninf.2014.00014/full},
    volume = 8,
    year = 2014
}

@article{lanczos,
    author = {Lanczos, C.},
    doi = {10.1137/0701007},
    issn = {0887-459X},
    journal = {Journal of the Society for Industrial and Applied Mathematics Series B Numerical Analysis},
    number = 1,
    pages = {76-85},
    title = {Evaluation of Noisy Data},
    url = {http://epubs.siam.org/doi/10.1137/0701007},
    volume = 1,
    year = 1964
}

@article{hcppipelines,
    author = {Glasser, Matthew F. and Sotiropoulos, Stamatios N. and Wilson, J. Anthony and Coalson, Timothy S. and Fischl, Bruce and Andersson, Jesper L. and Xu, Junqian and Jbabdi, Saad and Webster, Matthew and Polimeni, Jonathan R. and Van Essen, David C. and Jenkinson, Mark},
    doi = {10.1016/j.neuroimage.2013.04.127},
    issn = {1053-8119},
    journal = {NeuroImage},
    pages = {105-124},
    series = {Mapping the Connectome},
    title = {The minimal preprocessing pipelines for the Human Connectome Project},
    url = {http://www.sciencedirect.com/science/article/pii/S1053811913005053},
    volume = 80,
    year = 2013
}

@article{afni,
    author = {Cox, Robert W. and Hyde, James S.},
    doi = {10.1002/(SICI)1099-1492(199706/08)10:4/5<171::AID-NBM453>3.0.CO;2-L},
    journal = {NMR in Biomedicine},
    number = {4-5},
    pages = {171-178},
    title = {Software tools for analysis and visualization of fMRI data},
    volume = 10,
    year = 1997
}

@book{van2011targeted,
  title={Targeted learning: causal inference for observational and experimental data},
  author={Van der Laan, Mark J and Rose, Sherri and others},
  volume={4},
  year={2011},
  publisher={Springer}
}

@article{cosgrove2022limits,
  title={Limits to the generalizability of resting-state functional magnetic resonance imaging studies of youth: An examination of ABCD Study{\textregistered} baseline data},
  author={Cosgrove, Kelly T and McDermott, Timothy J and White, Evan J and others},
  journal={Brain imaging and behavior},
  volume={16},
  number={4},
  pages={1919--1925},
  year={2022},
  publisher={Springer}
}

@article{yan2013comprehensive,
  title={A comprehensive assessment of regional variation in the impact of head micromovements on functional connectomics},
  author={Yan, Chao-Gan and Cheung, Brian and Kelly, Clare and Colcombe, Stan and Craddock, R Cameron and Di Martino, Adriana and Li, Qingyang and Zuo, Xi-Nian and Castellanos, F Xavier and Milham, Michael P},
  journal={Neuroimage},
  volume={76},
  pages={183--201},
  year={2013},
  publisher={Elsevier}
}

@article{cao2009improving,
  title={Improving efficiency and robustness of the doubly robust estimator for a population mean with incomplete data},
  author={Cao, Weihua and Tsiatis, Anastasios A and Davidian, Marie},
  journal={Biometrika},
  volume={96},
  number={3},
  pages={723--734},
  year={2009},
  publisher={Oxford University Press}
}

@article{chernozhukov2013gaussian,
  title={Gaussian approximations and multiplier bootstrap for maxima of sums of high-dimensional random vectors},
  author={Chernozhukov, Victor and Chetverikov, Denis and Kato, Kengo},
  journal={Annals of Statistics},
  volume={41},
  number={6},
 pages={2786--2819},
  year={2013}
}

@article{du2025causal,
  title={Causal Inference for Genomic Data with Multiple Heterogeneous Outcomes},
  author={Du, Jin-Hong and Zeng, Zhenghao and Kennedy, Edward H. and Wasserman, Larry and Roeder, Kathryn},
  journal={Journal of the American Statistical Association},
  volume={120},
  number={552},
  pages={2484--2497},
  year={2025},
  doi={10.1080/01621459.2025.2468014},
  publisher={Taylor \& Francis}
}

@article{genovese2006exceedance,
  title={Exceedance control of the false discovery proportion},
  author={Genovese, Christopher R and Wasserman, Larry},
  journal={Journal of the American Statistical Association},
  volume={101},
  number={476},
  pages={1408--1417},
  year={2006},
  publisher={Taylor \& Francis}
}

@book{tsiatis2006semiparametric,
  title={Semiparametric theory and missing data},
  author={Tsiatis, Anastasios A},
  year={2006},
  publisher={Springer}
}

@article{DiMartino2013TheAutism,
    title = {{The autism brain imaging data exchange: towards a large-scale evaluation of the intrinsic brain architecture in autism}},
    year = {2013},
    journal = {Molecular Psychiatry 2014 19:6},
    author = {Di Martino, A. and Al, Et. and Milham, M. P.},
    number = {6},
    month = {6},
    pages = {659--667},
    volume = {19},
    publisher = {Nature Publishing Group},
    doi = {10.1038/mp.2013.78},
    issn = {1476-5578},
    pmid = {23774715}
}

@article{tchetgen2012semiparametric,
  title={Semiparametric theory for causal mediation analysis: efficiency bounds, multiple robustness, and sensitivity analysis},
  author={Tchetgen, Eric J Tchetgen and Shpitser, Ilya},
  journal={Annals of statistics},
  volume={40},
  number={3},
  pages={1816},
  year={2012}
}

@book{vanderweele2015explanation,
  title={Explanation in causal inference: methods for mediation and interaction},
  author={VanderWeele, Tyler},
  year={2015},
  publisher={Oxford University Press}
}

@book{van2000asymptotic,
  title={Asymptotic statistics},
  author={Van der Vaart, Aad W},
  volume={3},
  year={2000},
  publisher={Cambridge university press}
}

@article{kennedy2020sharp,
  title={Sharp instruments for classifying compliers and generalizing causal effects},
  author={Kennedy, Edward H and Balakrishnan, Sivaraman and G’sell, Max},
  year={2020}
}

@article{belloni2018high,
  title={High-dimensional econometrics and regularized GMM},
  author={Belloni, Alexandre and Chernozhukov, Victor and Chetverikov, Denis and Hansen, Christian and Kato, Kengo},
  journal={arXiv preprint arXiv:1806.01888},
  year={2018}
}

@book{imbens2015causal,
  title={Causal inference in statistics, social, and biomedical sciences},
  author={Imbens, Guido W and Rubin, Donald B},
  year={2015},
  publisher={Cambridge university press}
}

@article{satterthwaite2012impact,
  title={Impact of in-scanner head motion on multiple measures of functional connectivity: relevance for studies of neurodevelopment in youth},
  author={Satterthwaite, Theodore D and Wolf, Daniel H and Loughead, James and Ruparel, Kosha and Elliott, Mark A and Hakonarson, Hakon and Gur, Ruben C and Gur, Raquel E},
  journal={Neuroimage},
  volume={60},
  number={1},
  pages={623--632},
  year={2012},
  publisher={Elsevier}
}

@inproceedings{di2016new,
  title={A new regression-based method for the eye blinks artifacts correction in the EEG signal, without using any EOG channel},
  author={Di Flumeri, Gianluca and Aric{\`o}, Pietro and Borghini, Gianluca and Colosimo, Alfredo and Babiloni, Fabio},
  booktitle={2016 38th Annual International Conference of the IEEE Engineering in Medicine and Biology Society (EMBC)},
  pages={3187--3190},
  year={2016},
  organization={IEEE}
}

@article{stringer2019computational,
  title={Computational processing of neural recordings from calcium imaging data},
  author={Stringer, Carsen and Pachitariu, Marius},
  journal={Current opinion in neurobiology},
  volume={55},
  pages={22--31},
  year={2019},
  publisher={Elsevier}
}

@article{robins1994estimation,
  title={Estimation of regression coefficients when some regressors are not always observed},
  author={Robins, James M and Rotnitzky, Andrea and Zhao, Lue Ping},
  journal={Journal of the American statistical Association},
  volume={89},
  number={427},
  pages={846--866},
  year={1994},
  publisher={Taylor \& Francis}
}

@article{scharfstein1999adjusting,
  title={Adjusting for nonignorable drop-out using semiparametric nonresponse models},
  author={Scharfstein, Daniel O and Rotnitzky, Andrea and Robins, James M},
  journal={Journal of the American Statistical Association},
  volume={94},
  number={448},
  pages={1096--1120},
  year={1999},
  publisher={Taylor \& Francis}
}

@book{laan2003unified,
  title={Unified methods for censored longitudinal data and causality},
  author={Laan, Mark J and Robins, James M},
  year={2003},
  publisher={Springer}
}

@article{zhang2024motion,
  title={Motion-invariant variational autoencoding of brain structural connectomes},
  author={Zhang, Yizi and Liu, Meimei and Zhang, Zhengwu and Dunson, David},
  journal={Imaging Neuroscience},
  volume={2},
  pages={1--27},
  year={2024},
  publisher={MIT Press 255 Main Street, 9th Floor, Cambridge, Massachusetts 02142, USA~…}
}

@article{chaudhary2022fast,
  title={Fast, efficient, and accurate neuro-imaging denoising via supervised deep learning},
  author={Chaudhary, Shivesh and Moon, Sihoon and Lu, Hang},
  journal={Nature communications},
  volume={13},
  number={1},
  pages={5165},
  year={2022},
  publisher={Nature Publishing Group UK London}
}

@article{manzano2024denoising,
  title={Denoising diffusion MRI: Considerations and implications for analysis},
  author={Manzano Patron, Jose Pedro and Moeller, Steen and Andersson, Jesper LR and Ugurbil, Kamil and Yacoub, Essa and Sotiropoulos, Stamatios N},
  journal={Imaging Neuroscience},
  volume={2},
  pages={1--29},
  year={2024},
  publisher={MIT Press One Broadway, 12th Floor, Cambridge, Massachusetts 02142, USA~…}
}

\newpage
\appendix

\noindent {\bf \LARGE Web Supplement} 

\section{Derivations and Proofs}

\subsection{Identification with  intra-subject processing}

\textbf{Proof of Theorem 1.} 
Recall that $\bm{\psi}(a,a') = \tE[\bY(a, a')] = \tE[\bY(a, M(a'))]$. Define $\hbY(a, a') = \hbY(a, M(a'))$ and $\bm{\Delta}(a, a') = \bm{\Delta}(a, M(a'))$. Assumptions 1-4 (consistency, positivity, conditional randomization, and cross-world independence) are imposed on the full intra-subject measurements $\bX(a,m)$ and $\bH(a,m)$. By Definition 1 and Definition 2, these assumptions carry over to the induced subject-level outcomes $\bY(a,m)$ and derived outcomes $\hbY(a,m)$. In particular, the following conditional independence relations hold: 
\begin{align*}
&(i) \{ \bY(a,m),\hbY(a,m) \} \perp M(a') \mid \bW;\\
&(ii) M(a) \perp A \mid \bW;\\
&(iii) \{ \bY(a,m),\hbY(a,m) \} \perp A \mid \bW; \\
&(iv) \{ \bY(m),\hbY(m) \} \perp M \mid A,\bW.
\end{align*}
Using these relations, we have
\begin{align*}
&\tE[\tE[\hbY \mid A = a, M, \bW] \mid A=a',\bW]\\ 
&= \int_{m} \tE[\hbY | A=a, M=m, \bW] \times p[M=m | A=a', \bW] dm\\
&\overset{(Con)}{=} \int_{m} \tE[\hbY(m) | A=a, M=m, \bW] \times p[M=m | A=a', \bW] dm\\
&\overset{(Con,iv)}{=} \int_{m} \tE[\hbY(a,m) | A=a, \bW] \times p[M(a')=m | A=a', \bW] dm\\
&\overset{(ii,iii)}{=} \int_{m} \tE[\hbY(a,m) | \bW] \times p[M(a')=m | \bW] dm\\
&\overset{(i)}{=} \int_{m} \tE[\hbY(a,m) | M(a') = m, \bW] \times p[M(a')=m | \bW] dm\\
&= \tE[\tE[\hbY(a,a') \mid \bW] \mid \bW].
\end{align*}
By the definition of the discrepancy between the derived and true outcomes in Equation (6), we have
$$
\bm{\Delta}_T(a,m) + \bY(a,m) = \tE[\hbY(a,m) \mid \bU_D(a), \bU_I(a), \bW].
$$
It follows that
\begin{align*}
&\tE[\tE[\hbY \mid A = a, M, \bW] \mid A=a',\bW]\\ 
&= \tE[\tE[\hbY(a,a') \mid \bW] \mid \bW]\\ 
&= \tE[ \tE[ \tE[\hbY(a,a') \mid \bU_D(a), \bU_I(a), \bW] \mid \bW] \mid \bW] \\
&= \tE[\tE[\bY(a,a') + \bm{\Delta}_T(a,a') \mid \bW] \mid \bW]\\ 
&= \tE[\tE[\bY(a,a') \mid \bW] \mid \bW] + \tE[\tE[\bm{\Delta}_T(a,a') \mid \bW] \mid \bW].
\end{align*}
By the asymptotic unbiasedness in Assumption 5, as $T \rightarrow \infty$, we have
\begin{align*}
&\tE[\tE[\tE[\hbY \mid A = a, M, \bW] \mid A = a', \bW]] \\
&= \tE[\tE[\tE[\bY(a,a') \mid \bW] \mid \bW]] + \tE[\tE[\tE[\bm{\Delta}_T(a,a') \mid \bW] \mid \bW]] \\
&= \tE[\bY(a,a') ] + \tE[\bm{\Delta}_T(a,a')] \\
&= \tE[\bY(a,a') ] + o(1).
\end{align*}
Therefore, we now have $\bm{\psi}(a,a') = \tE[\bY(a,a')] = \tE[\tE[\tE[\hbY \mid A = a, M, \bW] \mid A = a', \bW]]$ as $T \rightarrow \infty$. This completes the proof. \qed

\subsection{Decomposition of estimation error}

\textbf{Proof of Equation (8).} 
We define $\hat{\phi}(\tP)$ as the influence function constructed using the derived outcome $\hY$, satisfying $\tP[\hat{\phi}(\tP)] = 0$ and $\tP[\hat{\phi}^2(\tP)] < \infty$. We are considering an estimator $\hat{\psi}$ satisfies the von Mises Taylor expansion
\begin{equation*}
\begin{aligned}
\hat{\psi}(\tP_1) - \hat{\psi}(\tP_2) &= \int \hat{\phi}(\tP_1) d(\tP_1 - \tP_2) + \hat{R}_2(\tP_1,\tP_2)\\
&= - \int \hat{\phi}(\tP_1) d \tP_2 + \hat{R}_2(\tP_1,\tP_2),
\end{aligned}
\end{equation*}
where $\tP_1, \tP_2 \in \cP$, and $\hat{R}_2(\tP_1,\tP_2)$ is the derived outcome based second order remainder term. The plug-in estimator $\hat{\psi}(\hat{\tP})$ uses $\hat{\tP}$ to estimate $\tP$, and typically has first order bias:
\begin{equation*}
\hat{\psi}(\hat{\tP}) - \hat{\psi}(\tP) = - \int \hat{\phi}(\hat{\tP}) d \tP + \hat{R}_2(\hat{\tP},\tP).
\end{equation*}
It suggests a bias-correction procedure. The first order bias can be estimated using the empirical distribution $\tP_n$,
\begin{equation*}
\hat{\text{Bias}} = - \int \hat{\phi}(\hat{\tP}) d \tP_n.
\end{equation*}
The one-step estimator, which is a corrected plug-in estimator, is of the form
\begin{equation*}
\begin{aligned}
\hat{\psi}_n(\tP_n,\hat{\tP}) &= \hat{\psi}(\hat{\tP}) - \hat{\text{Bias}}\\
&= \hat{\psi}(\hat{\tP}) + \int \hat{\phi}(\hat{\tP}) d \tP_n.
\end{aligned}
\end{equation*}
The asymptotic behavior of the one-step estimator follows that
\begin{equation*}
\begin{aligned}
\hat{\psi}_n(\tP_n,\hat{\tP}) - \hat{\psi}(\tP) &= \left\{ \hat{\psi}(\hat{\tP}) + \int \hat{\phi}(\hat{\tP}) d \tP_n \right\} - \hat{\psi}(\tP)\\
&= \left\{ \hat{\psi}(\hat{\tP}) - \hat{\psi}(\tP) \right\} + \int \hat{\phi}(\hat{\tP}) d \tP_n \\
&= \left\{ - \int \hat{\phi}(\hat{\tP}) d \tP + \hat{R}_2(\hat{\tP},\tP) \right\} + \int \hat{\phi}(\hat{\tP}) d \tP_n \\
&= \int \hat{\phi}(\hat{\tP}) d \left\{ \tP_n - \tP \right\} + \hat{R}_2(\hat{\tP},\tP) \\
&= \int \hat{\phi}(\tP) d \tP_n  + \int \left\{ \hat{\phi}(\hat{\tP}) - \hat{\phi}(\tP) \right\} d \left\{ \tP_n - \tP \right\} + \hat{R}_2(\hat{\tP},\tP) \\
&= \tP_n [\hat{\phi}(\tP)]  + (\tP_n - \tP) \{ \hat{\phi}(\hat{\tP}) - \hat{\phi}(\tP) \} + \hat{R}_2(\hat{\tP},\tP).
\end{aligned}
\end{equation*}
We apply this decomposition to the $j$th component of the derived outcome–based target parameter $\bm{\hat{\psi}}(a,a') = \tE[\hbY(a, a')]$ and its one-step estimator $\bm{\hat{\psi}}_{n,j}(a,a')$. By the definition of the discrepancy between the derived and true outcomes in Equation (6), we have
\begin{equation*}
\begin{aligned}
\hat{\psi}_{n,j}(a,a') - \psi_j(a,a') &= \hat{\psi}_{n,j}(a,a') - \hat{\psi}_{j}(a,a') + \hat{\psi}_{j}(a,a') - \psi_j(a,a')\\
&= \hat{\psi}_{n,j}(a,a') - \hat{\psi}_{j}(a,a') + \tE[\hY_j(a, M(a'))] - \tE[Y_j(a, M(a'))]\\
&= \tP_n [\hat{\phi}_j(\tP)] + (\tP_n - \tP)\{\hat{\phi}_j(\hat{\tP}) - \hat{\phi}_j(\tP)\} + \hat{R}_2(\hat{\tP},\tP) + \tE[ \Delta_{jT}(a,a') ].
\end{aligned}
\end{equation*}
This completes the proof. \qed

\subsection{Multiply robust estimation}

\textbf{Proof of Lemma 1.} 
The $\Delta_{jT}(a,m)$ can be decomposed into two parts as shown in Equation (9):
\begin{align*}
\Delta_{jT}(a,m) &= \tE[\hat{Y}_{j}(a,m) \mid \bU_D(a),\bU_I(a), \bW] -  Y_{j}(a,m)\\
&= \tE[\hat{Y}_{j}(a,m) \mid \bU_D(a),\bU_I(a),\bW] - \tE[\Tilde{Y}_{j}(a,m) \mid \bU_D(a),\bU_I(a), \bW]\\
&+ \tE[\Tilde{Y}_{j}(a,m) \mid \bU_D(a),\bU_I(a),\bW] -  Y_{j}(a,m),
\end{align*}
where the outcome with intra-subject processing $Y_j(a,m)$, the derived outcome with intra-subject processing $\hat{Y}_j(a,m)$, and the intermediate functional $\Tilde{Y}_j(a,m)$ are in cross-product type as defined in Definition 3. Specifically, define $\tP_T [\cdot]$ as the empirical average over intra-subject measurements, we have
\begin{equation*}
\begin{aligned}
Y_j(a,m) &= g \left[ \tE\left\{ \bX_{tv}(a,m) - f_v(\bH_t(a,m)) \right\}
\left\{ \bX_{tv'}(a,m) - f_{v'}(\bH_t(a,m)) \right\}
\mid \bU_D(a), \bU_I(a), \bW \right],\\
\hat{Y}_j(a,m) &= \hat{g} \{ \tP_T[ \bX_{v}(a,m) - \hat{f}_v(\bH(a,m))] [\bX_{v'}(a,m) - \hat{f}_{v'}(\bH(a,m))] \},\\
\Tilde{Y}_j(a,m) &= \hat{g} \{ \tP_T[ \bX_{v}(a,m) - f_v(\bH(a,m))] [\bX_{v'}(a,m) - f_{v'}(\bH(a,m))] \}.
\end{aligned}
\end{equation*}

Within each subject, let $\eta_{vt}(a,m) = X_{vt}(a,m) - f_{v}(\bH_{t}(a,m))$ denote the true intra-subject residuals, and $\hat{\eta}_{vt}(a,m) = X_{vt}(a,m) - \hat{f}_{v}(\bH_{t}(a,m))$ denote the estimated residuals, for $t = 1, \ldots,T$. Therefore, their contrast is $\hat{\eta}_{vt}(a,m) - \eta_{vt}(a,m) = f_{v}(\bH_{t}(a,m)) - \hat{f}_{v}(\bH_{t}(a,m))$. The estimated residual converges to the truth as the estimated intra-subject function converges. Without loss of generality, assume the residuals are centered in a sense that $\tP_T [\bm{\hat{\eta}}_{v}(a,m)]=0$ and $\tP_T [\bm{\eta}_{v}(a,m)]=0$.  Due to existence of possible temporal dependence and the use of data-adaptive models, we denote the rate of intra-subject processing satisfies $ \| \hat{f}_v - f_v \|_2 = \hat{\eta}_{vt} - \eta_{vt} = \cO(T^{-\alpha_0})$, for some $\alpha_0 \in (0, 1/2]$. Note that
\begin{align*}
&\tP_T \{ [ \bX_{v}(a,m) - \hat{f}_v(\bH(a,m))] [\bX_{v'}(a,m) - \hat{f}_{v'}(\bH(a,m))] \}\\
&= \tP_T [\hat{\eta}_{v}(a,m) \hat{\eta}_{v'}(a,m)] \\
&= \frac{1}{T} \sum_{t=1}^T \hat{\eta}_{vt}(a,m) \hat{\eta}_{v't}(a,m)\\
&= \frac{1}{T} \sum_{t=1}^T [\eta_{vt}(a,m) + O(T^{-\alpha})] [\eta_{v't}(a,m) + O(T^{-\alpha})]\\
&= \frac{1}{T} \sum_{t=1}^T \eta_{vt}(a,m) \eta_{v't}(a,m) + \frac{1}{T} \sum_{t=1}^T \eta_{vt}(a,m) O(T^{-\alpha}) \\
&+ \frac{1}{T} \sum_{t=1}^T \eta_{v't}(a,m) O(T^{-\alpha}) + \frac{1}{T} \sum_{t=1}^T O(T^{-\alpha}) O(T^{-\alpha})\\
&= \tP_T [\eta_{vt}(a,m) \eta_{v't}(a,m)] + O(T^{-2\alpha}) \\
&= \tP_T[ \bX_{v}(a,m) - f_v(\bH(a,m))] [\bX_{v'}(a,m) - f_{v'}(\bH(a,m))] + O(T^{-2\alpha}).
\end{align*}
By applying the continuous mapping theorem to the empirical outcome function $\hat{g}$, and  taking expectation on both sides, we have 
\begin{equation*}
\begin{aligned}
& \tE[ \hat{g} \{ \tP_T (\hat{\eta}_{v}(a,m) \hat{\eta}_{v'}(a,m)) \} \mid \bU_D(a),\bU_I(a),\bW]\\
&= \tE[ \hat{g} \{ \tP_T ( \eta_{v}(a,m) \eta_{v'}(a,m) ) \} \mid \bU_D(a),\bU_I(a),\bW] + O(T^{-2\alpha}).
\end{aligned}
\end{equation*}
Then by Assumption 7, as $n,T \rightarrow \infty$, we have 
\begin{align*}
&\sqrt{n} \big[ \tE \big\{ \hat{g} \Big(  \tP_T \big(\hat{\eta}_{v}(a,m) \hat{\eta}_{v'}(a,m) \big) \Big) \mid \bU_D(a),\bU_I(a),\bW \big\} \\
&- \tE \big\{ \hat{g} \Big( \tP_T \big( \eta_{v}(a,m) \eta_{v'}(a,m) \big) \Big) \mid \bU_D(a),\bU_I(a),\bW \big\} \big] \\
&= \sqrt{n} \big[ \tE \{\hat{Y}_{j}(a,m) \mid \bU_D(a),\bU_I(a),\bW \} - \tE \{ \Tilde{Y}_{j}(a,m) \mid \bU_D(a),\bU_I(a), \bW \} \big]\\
&\rightarrow 0.
\end{align*}
By applying the continuous mapping theorem to the empirical outcome function $\hat{g}$, we have 
\begin{equation*}
\begin{aligned}
& \sqrt{n}\{ \tE[ \tP_T [\hat{\eta}_{vt} \hat{\eta}_{v't}] \mid \bU_D(a),\bU_I(a),\bW] - \tE[\tP_T [\eta_{vt} \eta_{v't}] \mid \bU_D(a),\bU_I(a),\bW] \} \\
&= \sqrt{n}\{ \tE[ \hat{g}[\tP_T [\hat{\eta}_{vt} \hat{\eta}_{v't}]] \mid \bU_D(a),\bU_I(a),\bW] - \tE[ \hat{g}[\tP_T [\eta_{vt} \eta_{v't}]] \mid \bU_D(a),\bU_I(a),\bW] \}\\
&= \sqrt{n} \{\tE[\hat{Y}_{j}(a,m) \mid \bU_D(a),\bU_I(a),\bW] - \tE[\Tilde{Y}_{j}(a,m) \mid \bU_D(a),\bU_I(a), \bW]\}\\
& \rightarrow 0,
\end{aligned}
\end{equation*}
as $n,T \rightarrow \infty$. According to Assumption 6, 
\begin{equation*}
\underset{1 \le j \le J}{\max} |\tE[ \Tilde{\Delta}_{jT}(a,a') ]| = o(n^{-1/2}),
\end{equation*}
where $\Tilde{\Delta}_{jT}(a,m) = \tE[\Tilde{Y}_{j}(a,m) \mid \bU_D(a),\bU_I(a),\bW] -  Y_{j}(a,m)$. Therefore, 
\begin{equation*}
\begin{aligned}
&\underset{1 \le j \le J}{\max} |\tE[ \Delta_{jT}(a,a') ]|\\
&= \underset{1 \le j \le J}{\max} |\tE[\hat{Y}_{j}(a,m) \mid \bU_D(a),\bU_I(a),\bW] - \tE[\Tilde{Y}_{j}(a,m) \mid \bU_D(a),\bU_I(a), \bW]|\\
&+ \underset{1 \le j \le J}{\max} |\tE[ \Tilde{\Delta}_{jT}(a,a') ]|\\
&= o(n^{-1/2}),
\end{aligned}
\end{equation*}
as $n,T \rightarrow \infty$. This completes the proof. \qed

\newpage

\textbf{Proof of Theorem 2.} 
We are considering the AIPW estimator $\hat{\psi}_{n,j}^{\text{AIPW}}(a,a')$ in the form that
\begin{equation*}
\begin{aligned}
    \hat{\psi}_{n,j}^{\text{AIPW}}(a,a') &= \tP_n \Big [ \frac{\mathbb{I}(A=a)}{\hat{\pi}_A(a | \bW)} \frac{\hat{\pi}_{A}(a' | M, \bW)\hat{\pi}_A(a | \bW)}{\hat{\pi}_A(a | M, \bW)\hat{\pi}_{A}(a' | \bW)} \left\{\hat{Y}_{j}-\hat{b}_{j}(M,a,\bW) \right\}\\
    &+ \frac{\mathbb{I}(A=a')}{\hat{\pi}_{A}(a' | \bW )}\left\{ \hat{b}_{j}(M,a,\bW ) - \hat{\xi}_{aa'j}^{sr}(\bW) \right\} + \hat{\xi}_{aa'j}^{sr}(\bW) \Big],
\end{aligned}
\end{equation*}
with the influence function
\begin{equation*}
\begin{aligned}
    \phi_{aa'j}(O_i) &= \frac{\mathbb{I}(A_i=a)}{\pi_A(a | \bW_i)} \frac{\pi_{A}(a' | M_i, \bW_i)\pi_A(a | \bW_i)}{\pi_A(a | M_i, \bW_i)\pi_{A}(a' | \bW_i)} \left(\hat{Y}_{ij}-b_j(M_i,a,\bW_i)\right)\\
    &+ \frac{\mathbb{I}(A_i=a')}{\pi_{A}(a' | \bW_i )}\left( b_j\left(M_i,a,\bW_i\right) - \xi_j^{sr}(a,a',\bW_i) \right) + \xi_j^{sr}(a,a',\bW_i) - \psi_j(a,a'),
\end{aligned}
\end{equation*}
which is given by taking the representations of the influence function in the causal mediation literature with Bayes' rule and sequential regression \citep{tchetgen2012semiparametric,vanderweele2015explanation}. Under the existence of an intra-subject processed derived outcome $\hbY$ satisfying Assumptions 6 and 7, inter-subject Assumptions 1-4, and the boundedness conditon in Theorem 2, the AIPW estimator and its influence function is well defined, and $\hat{\psi}_{n,j}^{AIPW}(a,a')$ admits the decomposition as shown in the proof of Equation (8) in Web Supplement Section A.2:
\begin{equation*}
\hat{\psi}_{n,j}^{AIPW}(a,a') - \psi_j(a,a') = \tP_n [\hat{\phi}_j(\tP)] + (\tP_n - \tP)\{\hat{\phi}_j(\hat{\tP}) - \hat{\phi}_j(\tP)\} + \hat{R}_2(\hat{\tP},\tP) + \tE[ \Delta_{jT}(a,a') ].
\end{equation*}
The first term, after scaling by $\sqrt{n}$, converges in distribution to a normal limit by the central limit theorem. The second empirical process term is asymptotically negligible under a Donsker condition or when sample splitting is employed \citep{van2000asymptotic,chernozhukov2018double,kennedy2020sharp}. We can explicitly write the second order remainder term:
\begin{align*}
&\hat{R}_2(\hat{\tP},\tP) \\
&= \hat{\psi}_{n,j}^{AIPW}(a,a') - \hat{\psi}_j(a,a') + \hat{\tP} [\hat{\phi}_j] \\
&= \hat{\psi}_{nj}(a,a') - \hat{\psi}_j(a,a') + \int \hat{\phi}_j(\hat{\tP}) d \tP\\
&= \int \big\{ \frac{I(A=a)}{\hat{\pi}_A(a|\bW)} \frac{\hat{\pi}_A(a'|M,\bW) \hat{\pi}_A(a|\bW)}{\hat{\pi}_A(a|M,\bW) \hat{\pi}_A(a'|\bW)} \left[ \hat{Y}_j - \hat{b}_j(M,a,\bW) \right]\\
&\ \ \ + \frac{I(A=a)}{\hat{\pi}_A(a|\bW)} \left[ \hat{b}_j(M,a,\bW) - \hat{\xi}_{aa'j}(\bW) \right] + \hat{\xi}_{aa'j}(\bW) - \hat{\psi}(\tP) \big\} d \tP\\
&= \int \big\{ \frac{I(A=a)}{\hat{\pi}_A(a|\bW)} \frac{\hat{\pi}_A(a'|M,\bW) \hat{\pi}_A(a|\bW)}{\hat{\pi}_A(a|M,\bW) \hat{\pi}_A(a'|\bW)} \left[ b_j(M,a,\bW) - \hat{b}_j(M,a,\bW) \right]\\
&\ \ \ + \frac{I(A=a)}{\hat{\pi}_A(a|\bW)} \left[ \xi_{aa'j}(\bW) - \hat{\xi}_{aa'j}(\bW) \right] + \hat{\xi}_{aa'j}(\bW) - \hat{\psi}(\tP) \big\} d \tP\\
&= \int \big\{ \frac{I(A=a)}{\hat{\pi}_A(a|\bW)} \left[ \frac{\hat{\pi}_A(a'|M,\bW) \hat{\pi}_A(a|\bW)}{\hat{\pi}_A(a|M,\bW) \hat{\pi}_A(a'|\bW)} - \frac{\pi_A(a'|M,\bW) \pi_A(a|\bW)}{\pi_A(a|M,\bW) \pi_A(a'|\bW)} \right] \left[ b_j(M,a,\bW) - \hat{b}_j(M,a,\bW) \right]\\
&\ \ \ + \frac{I(A=a)}{\hat{\pi}_A(a|\bW)} \frac{\pi_A(a'|M,\bW) \pi_A(a|\bW)}{\pi_A(a|M,\bW) \pi_A(a'|\bW)} \left[ b_j(M,a,\bW) - \hat{b}_j(M,a,\bW) \right]\\
&\ \ \ + \frac{I(A=a')}{\hat{\pi}_A(a'|\bW)} \left[ \xi_{aa'j}(\bW) - \hat{\xi}_{aa'j}(\bW) \right] + \hat{\xi}_{aa'j}(\bW) - \hat{\psi}(\tP) \big\} d \tP\\
&= \int \big\{ \frac{I(A=a)}{\hat{\pi}_A(a|\bW)} \left[ \frac{\hat{\pi}_A(a'|M,\bW) \hat{\pi}_A(a|\bW)}{\hat{\pi}_A(a|M,\bW) \hat{\pi}_A(a'|\bW)} - \frac{\pi_A(a'|M,\bW) \pi_A(a|\bW)}{\pi_A(a|M,\bW) \pi_A(a'|\bW)} \right] \left[ b_j(M,a,\bW) - \hat{b}_j(M,a,\bW) \right]\\
&\ \ \ + \left[\frac{I(A=a)}{\hat{\pi}_A(a|\bW)} - \frac{I(A=a)}{\pi_A(a|\bW)} \right] \frac{\pi_A(a'|M,\bW) \pi_A(a|\bW)}{\pi_A(a|M,\bW) \pi_A(a'|\bW)} \left[ b_j(M,a,\bW) - \hat{b}_j(M,a,\bW) \right]\\
&\ \ \ + \frac{I(A=a)}{\pi_A(a|\bW)} \frac{\pi_A(a'|M,\bW) \pi_A(a|\bW)}{\pi_A(a|M,\bW) \pi_A(a'|\bW)} \left[ b_j(M,a,\bW) - \hat{b}_j(M,a,\bW) \right]\\
&\ \ \ + \frac{I(A=a')}{\hat{\pi}_A(a'|\bW)} \left[ \xi_{aa'j}(\bW) - \hat{\xi}_{aa'j}(\bW) \right] + \hat{\xi}_{aa'j}(\bW) - \hat{\psi}(\tP) \big\} d \tP\\
&= \int \big\{ \frac{I(A=a)}{\hat{\pi}_A(a|\bW)} \left[ \frac{\hat{\pi}_A(a'|M,\bW) \hat{\pi}_A(a|\bW)}{\hat{\pi}_A(a|M,\bW) \hat{\pi}_A(a'|\bW)} - \frac{\pi_A(a'|M,\bW) \pi_A(a|\bW)}{\pi_A(a|M,\bW) \pi_A(a'|\bW)} \right] \left[ b_j(M,a,\bW) - \hat{b}_j(M,a,\bW) \right]\\
&\ \ \ + \left[\frac{I(A=a)}{\hat{\pi}_A(a|\bW)} - \frac{I(A=a)}{\pi_A(a|\bW)} \right] \frac{\pi_A(a'|M,\bW) \pi_A(a|\bW)}{\pi_A(a|M,\bW) \pi_A(a'|\bW)} \left[ b_j(M,a,\bW) - \hat{b}_j(M,a,\bW) \right]\\
&\ \ \ + \left[\frac{I(A=a')}{\hat{\pi}_A(a'|\bW)} - \frac{I(A=a')}{\pi_A(a'|\bW)} \right] \left[ \xi_{a,a'j}(\bW) - \hat{\xi}_{a,a'j}(\bW) \right]\\
&\ \ \ + \frac{I(A=a)}{\pi_A(a|\bW)} \frac{\pi_A(a'|M,\bW) \pi_A(a|\bW)}{\pi_A(a|M,\bW) \pi_A(a'|\bW)} \left[ b_j(M,a,\bW) - \hat{b}_j(M,a,\bW) \right]\\
&\ \ \ + \frac{I(A=a')}{\hat{\pi}_A(a'|\bW)} \left[ \xi_{aa'j}(\bW) - \hat{\xi}_{aa'j}(\bW) \right] + \hat{\xi}_{aa'j}(\bW) - \hat{\psi}(\tP) \big\} d \tP\\
&=\tP \big\{ \frac{I(A=a)}{\hat{\pi}_A(a|\bW)} \left[ \frac{\hat{\pi}_A(a'|M,\bW) \hat{\pi}_A(a|\bW)}{\hat{\pi}_A(a|M,\bW) \hat{\pi}_A(a'|\bW)} - \frac{\pi_A(a'|M,\bW) \pi_A(a|\bW)}{\pi_A(a|M,\bW) \pi_A(a'|\bW)} \right] \left[b_j(M,a,\bW) - \hat{b}_j(M,a,\bW) \right] \\
&+ \left[ \frac{I(A=a)}{\hat{\pi}_A(a|\bW)} - \frac{I(A=a)}{\pi_A(a|\bW)} \right] \frac{\pi_A(a'|M,\bW) \pi_A(a|\bW)}{\pi_A(a|M,\bW) \pi_A(a'|\bW)} \left[ b_j(M,a,\bW) - \hat{b}_j(M,a,\bW) \right]\\
&+ \left[ \frac{I(A=a')}{\hat{\pi}_A(a'|\bW)} - \frac{I(A=a')}{\pi_A(a'|\bW)} \right] \left[ \xi_{aa'j}^{sr}(\bW) - \hat{\xi}_{aa'j}^{sr}(\bW) \right]
\big\}.
\end{align*}
Note that the nuisance rate conditions in Theorem 2 states that we have $\alpha + \beta > \frac{1}{2}$, $\beta + \gamma > \frac{1}{2}$, and $\gamma + \zeta > \frac{1}{2}$, where $\| \pi_{A}(a |m, \bw)\pi_A(a|\bw) - \hat{\pi}_{A}(a | m, \bw)\hat{\pi}_A(a | \bw) \|_2 = \cO(n^{-\alpha})$, $\| b_{j}(m,a,\bw) - \hat{b}_{j}(m,a,\bw) \|_2 = \cO(n^{-\beta})$, $\| \pi_{A}(a|\bw) - \hat{\pi}_A(a | \bw) \|_2 = \cO(n^{-\gamma})$, and $\| \xi_{aa'j}^{sr}(\bw) - \hat{\xi}_{aa'j}^{sr}(\bw) \|_2 = \cO(n^{-\zeta})$. It follows that the second order remainder term $\hat{R}_2(\hat{\tP},\tP)$ is of order $o_{\tP}(n^{-1/2})$. The final bias term $\tE[ \Delta_{jT}(a,a') ]$ in the decomposition is also in the order of $o(n^{-1/2})$ by Lemma 1. Therefore, we have 
\begin{equation*}
( \hat{\psi}_{n,j}^{\text{AIPW}}(a,a') - \psi_j(a,a') ) = \tP_n [\phi_{aa'j}(O)] + o_{\tP}(n^{-1/2}).
\end{equation*}
This completes the proof. \qed

\textbf{Proof of Corollary 1.} The multiple robustness property follows from the structure of the second-order remainder term,
\begin{equation*}
\begin{aligned}
&\hat{R}_2(\hat{\tP},\tP)\\
&=\tP \left\{ \frac{I(A=a)}{\hat{\pi}_A(a|\bW)} \left[ \frac{\hat{\pi}_A(a'|M,\bW) \hat{\pi}_A(a|\bW)}{\hat{\pi}_A(a|M,\bW) \hat{\pi}_A(a'|\bW)} - \frac{\pi_A(a'|M,\bW) \pi_A(a|\bW)}{\pi_A(a|M,\bW) \pi_A(a'|\bW)} \right] \left[b_j(M,a,\bW) - \hat{b}_j(M,a,\bW) \right]\right. \\
&+ \left.\left[ \frac{I(A=a)}{\hat{\pi}_A(a|\bW)} - \frac{I(A=a)}{\pi_A(a|\bW)} \right] \frac{\pi_A(a'|M,\bW) \pi_A(a|\bW)}{\pi_A(a|M,\bW) \pi_A(a'|\bW)} \left[ b_j(M,a,\bW) - \hat{b}_j(M,a,\bW) \right]\right.\\
&+ \left.\left[ \frac{I(A=a')}{\hat{\pi}_A(a'|\bW)} - \frac{I(A=a')}{\pi_A(a'|\bW)} \right] \left[ \xi_{aa'j}^{sr}(\bW) - \hat{\xi}_{aa'j}^{sr}(\bW) \right]
\right\}.
\end{aligned}
\end{equation*}
This remainder term equals zero if any one of the following conditions holds:\\
(i). $\hat{\pi}_{A}(a | m, \bw)$ and $\hat{\pi}_A(a | \bw)$ are consistently estimated;\\
(ii). $\hat{b}_j(m,a,\bw)$ and $\hat{\pi}_A(a | \bw)$ are consistently estimated;\\
(iii). $\hat{b}_j(m,a,\bw)$ and $\hat{\xi}_j^{sr}(a,a',\bW_i)$ are consistently estimated.\\
All remaining terms in the asymptotic expansion are controlled as established in Theorem 2. This completes the proof. \qed

\subsection{Asymptotic normality}

\textbf{Proof of Corollary 2.} 
By Theorem 2, the AIPW estimator admits the asymptotic linear representation
\begin{equation*}
( \hat{\psi}_{nj}^{\text{AIPW}}(a,a') - \psi_j(a,a') ) = \tP_n [\phi_{aa'j}(O)] + o_{\tP}(n^{-1/2}),
\end{equation*}
where $\phi_{aa'j}(O)$ denotes the corresponding influence function. Let $\sigma^2_{aa'j} = \tP [\phi^2_{aa'j}(O)]$. Since $\tP [\phi_{aa'j}(O)]=0$ and $\sigma^2_{aa'j}<\infty$, the central limit theorem yields
\begin{equation*}
    \sqrt{n}\left( \hat{\psi}_{nj}^{\text{AIPW}}(a,a') - \psi_j(a,a') \right) \overset{d}{\rightarrow} N(0, \sigma^2_{aa'j}).
\end{equation*}
This establishes the stated asymptotic normality. \qed

\section{Details of the Multiple Testing Procedure}

\subsection{Simultaneous confidence interval}
\label{sec:simul_CI}

Many biomedical applications involve hundreds of statistical tests. We control the family-wise error rate using Gaussian approximation results developed for the derived outcome framework in \cite{qiu2023unveiling,du2025causal}. To enable valid simultaneous inference, it is necessary to exclude super efficient estimators for which $\theta_j \rightarrow 0$, since in this case the population distribution of the corresponding influence function becomes degenerate \citep{chernozhukov2013gaussian,belloni2018high}.

\begin{condition}\label{condition4_bounded_var}
There exists a non-empty set $\cS^* \subseteq \{1,\cdots,J\}$ of informative estimators such that $\underset{j \in \cS^*}{\min}\, \theta_j^2 > c_5$ for some constant $c_5 > 0$, and $\underset{j \notin \cS^*}{\max}\, \theta_j^2 \rightarrow 0$. In addition, for any distinct $j_1, j_2 \in \cS^*$, the corresponding influence functions satisfy $| \text{cor}(\eta_{j_1}, \eta_{j_2}) | \leq 1 - c_6$ for some constant $c_6 > 0$.
\end{condition}

\cref{condition4_bounded_var} postulates the existence of an informative set $\cS^*$ and imposes boundedness requirements on the variances and covariances of estimators among $\cS^*$. In practice, this condition can be satisfied by screening out estimators $\hat{\tau}_j$ with very small estimated variances and forming $\cS_1 = \{j: \hat{\theta}_j^2 \ge c_7 \}$, where $c_7 > 0$ is a small constant.

For a generic index set $\cS$, we consider the maximal standardized statistic $\kappa_{\cS} = \underset{j \in \cS}{\max} \sqrt{n} |\hat{\tau}_j - \tau_j| \hat{\theta}_j^{-1}$, which is asymptotically equivalent to $\underset{j \in \cS}{\max} \sqrt{n} \theta_j^{-1} | \tP_n [\hat{\eta}_j] |$. Let $\hat{\bm \eta}_{\cS}$ denote the vector collecting $\{ \hat{\eta}_j : j \in \cS \}$, let $\bA_{\cS} = \tP_n [\hat{\bm \eta}_{\cS} \hat{\bm \eta}_{\cS}^T]$ be the corresponding sample covariance matrix, and let $\bD_{\cS}$ be the diagonal matrix with entries $\{ \hat{\theta}_j : j \in \cS \}$. The Gaussian approximation results in the Lemma 11 of \cite{du2025causal} shows that, when $\cS_1 \subseteq \cS^*$, the distribution of $\kappa_{\cS_1}$ can be well approximated by $\| \mathbf{z}_{\cS_1}\|_{\infty}$, where $\mathbf{z}_{\cS_1} \sim N(\bzero, \bD_{\cS_1}^{-1} \bA_{\cS_1} \bD_{\cS_1}^{-1})$. This result enables efficient approximation of the null distribution of $\kappa_{\cS_1}$ using a multiplier bootstrap procedure. Specifically, we generate independent standard normal variables $g_1, \ldots, g_n$, calculate $\mathbf{z}_b = n^{-1/2} \bD_{\cS_1}^{-1} \sum_{i=1}^n g_i \hat{\bm \eta}_{\cS_1,i}$, and repeat this procedure for $b = 1, \ldots, B$. The upper $\alpha$ quantile of $\kappa_{\cS_1}$ is then estimated by $z^{max}_{1-\alpha} = \text{inf} \{ x: \frac{1}{B} \sum_{b=1}^B \mathbb{I}(|\bz_{b}|_{\infty} \leq x) \geq 1 - \alpha \}$. The resulting $(1-\alpha)$ simultaneous confidence intervals for $\{ \tau_j \}_{j\in \cS^*}$ can be constructed as $\hat{\tau}_j \pm z^{max}_{1-\alpha} n^{-1/2} \hat{\theta}_j$.

\subsection{False discovery proportion control}
\label{sec:fdp_control}

The simultaneous confidence intervals may lack power when the number of tests $J$ is large. Therefore, we adopt a multiple testing procedure that controls the false discovery proportion (FDP), defined as the ratio of false discoveries to the total number of discoveries. For a given constant $c > 0$, our goal is to control $\tP(\text{FDP} > c)$, the probability that the FDP exceeds the specified threshold $c$, which is referred to as the FDP exceedance rate (FDPex). In contrast to procedures that control the false discovery rate (FDR), defined as the expectation of the FDP, control of FDPex accounts for the variability of the FDP within a single realized sample and therefore provides a stronger form of error control in high dimensional testing problems.

The FDPex is typically controlled at a prespecified small level $\alpha > 0$ such that $\tP(\text{FDP} > c) \le \alpha$, which yields a stronger form of control on the false discovery proportion and remains asymptotically powerful \citep{genovese2006exceedance,belloni2018high,qiu2023unveiling}. We adopt the same multiple testing procedure developed for doubly robust estimators in \cite{du2025causal}, which extends directly to the multiply robust estimators proposed here for intra-subject processed outcomes.

A step down procedure coupled with a Gaussian multiplier bootstrap is applied to a sequence of nested sets $\cS_1 \supset \cdots \supset \cS_k \supset \cdots \supset \cS_K$ and the associated sequential hypotheses $H^k_0 : \tau_j=0, \ \  \text{vs.}\ \ H^k_1: \tau_j \neq 0, \ \exists \  j \in \cS_k, \  k=1,2,\ldots, K$. We initialize $k = 1$ and set the discovery set $\Omega_1 = \varnothing$. The empirical informative set is defined as $\cS_1 = { j : \hat{\theta}_j^2 \ge c_7 }$ for a small constant $c_7 > 0$. We then use the multiplier bootstrap described in \cref{sec:simul_CI} to approximate the distribution of the maximum standardized statistic $\kappa_{\cS_1}$ over $\cS_1$. If $\kappa_{\cS_1}$ exceeds the upper $\alpha$ quantile $z_{1-\alpha}^{\max}$, we move the index $j_1$ attaining $\kappa_{\cS_1}$ from $\cS_1$ to $\Omega_1$ to obtain updated sets $\cS_2$ and $\Omega_2$. This step down procedure is iterated by testing $\kappa_{\cS_k}$ at each stage and updating the nested sets until the first iteration $k_0$ at which $\kappa_{\cS_{k_0}}$ is not significant. After termination, an augmentation step is applied by adding the indices of the next $\lfloor |\Omega_{k_0}| c/(1-c) \rfloor$ largest values of $|t_j|$ into $\Omega_{k_0}$ to form the final discovery set $\Omega^*$. By Theorem 13 of \cite{du2025causal}, the resulting set $\Omega^*$ satisfies $\tP(\text{FDP} > c) \le \alpha$. The full algorithmic description is provided in Web Supplement Section B.3.

\subsection{The algorithm of the step-down procedure}

\renewcommand{\algorithmicrequire}{\textbf{Inputs:}}
\renewcommand{\algorithmicensure}{\textbf{Output:}}

\begin{algorithm}[H]
\caption{The step-down procedure with augmentation for controlling FDPex}\label{alg:FDP_control}
\begin{algorithmic}[1]
\Require The estimated influence functions $\hat{\eta}_{ij}$ for subjects $i=1,\cdots,n$ and regions $j=1,\cdots,J$. The FDPex threshold $c$ and probability $\alpha$. The threshold for non-informative screening $c_0$. The number of bootstrap samples $B$.
\State Initialize $k=1$. Calculate the estimated variances $\hat{\theta}^2_j = \frac{1}{n} \sum_{i=1}^n \hat{\eta}_{ij}^2$ and the standardized statistics $t_j = \sqrt{n} \hat{\theta}_j^{-1} \sum_{i=1}^n \hat{\eta}_{ij}$ for $j = 1,\cdots,J$. Collect the informative index set $\cS_1=\{j \in 1,\cdots,J : \hat{\theta}_j \geq c_0 \}$ and discovery set $\Omega_1 = \emptyset$. 
\While{not converge}\\
Calculate the maximal statistic over $\cS_k$ as $\kappa_{\cS_k} = \underset{j \in \cS_l}{\text{max}} |t_j|$. \\
Collect $\hat{\bm \eta}_{il}$ as the vectorization of $\{ \hat{\eta}_{ij}: j \in \cS_l \}$, and $\bD_{nl}$ as the diagonal matrix of $\{ \hat{\theta}_j, j \in \cS_l \}$.\\
Generate standard normal variables $g_{1b},...,g_{nb} \overset{iid}{\sim} N(0,1)$, and then calculate $\textbf{z}_{bk} = \bD_{nl}^{-1} \frac{1}{\sqrt{n}} \sum_{i=1}^n g_{ik} \hat{\bm \eta}_{ik}$ for $b=1,...,B$.\\
Approximate the upper $\alpha$ quantile of $\kappa_{\cS_k}$ under $H^k_0$ by $z^{max}_{k,1-\alpha} = \text{inf} \{ x: \frac{1}{B} \sum_{b=1}^B \mathbb{I}(|\bz_{bl}|_{\infty} \leq x) \geq 1 - \alpha \}$

\If{$\kappa_l > z^{max}_{l,1-\alpha}$}
    \State Find $j_k = \underset{j \in \cA_k}{\text{argmax}} |t_j|$, $\cS_{k+1}=\cS_k \setminus \{j_k\}$, and $\Omega_{k+1} = \cS_k \cup \{j_k \}$.
\Else
    \State Stop the step-down procedure.
\EndIf\\
$l = l + 1$
\EndWhile\\
Augment $\cS_k$ with the next $\lfloor |\cS_k|c/(1-c) \rfloor$ largest $|t_j|$ in $\cS_k$, denote as $\cS$.
\Ensure The final discovery set $\cS$.
\end{algorithmic}
\end{algorithm}

\section{Supplementary Materials for the Application to Cortical Surface rs-fMRI Data in ABIDE}

\subsection{Scanner and site information}

Two sites in the ABIDE data contained children ages 8-13: Kennedy Krieger institute and New York University. Resting-state fMRI imaging data from Kennedy Krieger Institute were acquired using one of two protocols: 1) a 3T Philips Achieva scanner, 8-channel head coil, repetition time (TR)/echo time (TE)=2500/30 ms, flip angle 75$^\circ$, 3$\times$3$\times$3 mm voxels, SENSE phase reduction=3, 2 dummy scans, with most scans at either 5 min 20 sec or 6 min 30 sec; or 2) the same protocol except using a 32-channel head coil with most scans 6 min 30 sec. At New York Univerity, MRI scans were acquired using a 3T Siemens Allegra scanner with an 8-channel head coil, TR/TE=2000/15 ms, flip angle=90$^\circ$, 3x3x4 mm voxels, 6 min scan duration. The first two volumes were removed for signal stabilization. An anatomical T1w scan was also collected for each child. 

\subsection{List of stimulant medication}

We define the stimulant medication as in the following list: Dexmethylphenidate, Amphetamine, Dextroamphetamine, Lisdexamfetamine, Methylphenidate, Methylphenidate Extended Release, Methylphenidate Transdermal, Methylphenidate Long Acting.

\subsection{MRI data preprocessing details}

We applied the preprocessing pipeline implemented in \texttt{fMRIPrep} \citep{esteban2019fmriprep} with the \texttt{cifti} option to the T1w and resting-state fMRI data, including anatomical tissue segmentation, surface construction, and surface registration, followed by fMRI motion correction, slice-time correction, boundary-based coregistration, and resampling to the fsaverage template. The detailed output from fMRIPrep is included in Web Supplement Section C.4. We visually inspected the accuracy of the cortical segmentation using the fMRIPrep quality control html files. We excluded participants due to issues with the cortical segmentation. Issues with fMRIprep included image homogeneity issues, outliers in brain morphology, and motion during the T1 scan.

\subsection{Auto-generated text from fMRIprep data preprocessing}

Results included in this manuscript come from preprocessing performed using \texttt{fMRIPrep} 21.0.2 (\citet{fmriprep1}; \citet{fmriprep2}; RRID:SCR\_016216), which is based on \texttt{Nipype} 1.6.1
(\citet{nipype1}; \citet{nipype2}; RRID:SCR\_002502). The text below is automatically produced by \texttt{fMRIprep}.

\textbf{Anatomical data preprocessing} A total of 1 T1-weighted (T1w) images were found within the input BIDS
dataset. The T1-weighted (T1w) image was corrected for intensity non-uniformity (INU) with\texttt{N4BiasFieldCorrection} \citep{n4}, distributed with ANTs 2.3.3 \citep[RRID:SCR\_004757]{ants}, and used as T1w-reference throughout the workflow. The T1w-reference was then skull-stripped with a \texttt{Nipype} implementation of the \texttt{antsBrainExtraction.sh} workflow (from ANTs), using OASIS30ANTs as target template. Brain tissue segmentation of cerebrospinal fluid (CSF), white-matter (WM) and gray-matter (GM) was performed on the brain-extracted T1w using \texttt{fast} \citep[FSL 6.0.5.1:57b01774, RRID:SCR\_002823,][]{fsl_fast}. Brain surfaces were reconstructed using \texttt{recon-all} \citep[FreeSurfer 6.0.1, RRID:SCR\_001847,][]{fs_reconall}, and the brain mask estimated previously was refined with a custom variation of the method to reconcile ANTs-derived and FreeSurfer-derived segmentations of the cortical gray-matter of Mindboggle \citep[RRID:SCR\_002438,][]{mindboggle}. Volume-based spatial normalization to two standard spaces (MNI152 NLin6 Asym, MNI152 NLin 2009c Asym) was performed through nonlinear registration with \texttt{antsRegistration} (ANTs 2.3.3), using brain-extracted versions of both T1w reference and the T1w template. The following templates were selected for spatial normalization: \emph{ICBM 152 Nonlinear Asymmetrical template version 2009c} {[}\citet{mni152nlin2009casym}, RRID:SCR\_008796; TemplateFlow ID: MNI152NLin2009cAsym{]}, \emph{FSL's MNI ICBM 152 non-linear 6th Generation Asymmetric Average Brain Stereotaxic Registration Model} {[}\citet{mni152nlin6asym}, RRID:SCR\_002823; TemplateFlow ID: MNI152NLin6Asym{]}.

\textbf{Functional data preprocessing} For each of the 1 BOLD runs found per subject (across all tasks and sessions), the following preprocessing was performed. First, a reference volume and its skull-stripped version were generated using a custom methodology of \texttt{fMRIPrep}. Head-motion parameters with respect to the BOLD reference (transformation matrices, and six corresponding rotation and translation parameters) are estimated before any spatiotemporal filtering using \texttt{mcflirt} \citep[FSL 6.0.5.1:57b01774,][]{mcflirt}. BOLD runs were slice-time corrected to 1.22s (0.5 of slice acquisition range 0s-2.45s) using \texttt{3dTshift} from AFNI \citep[RRID:SCR\_005927]{afni}. The BOLD time-series (including slice-timing correction when applied) were resampled onto their original, native space by applying the transforms to correct for head-motion. These resampled BOLD time-series will be referred to as \emph{preprocessed BOLD in original space}, or just \emph{preprocessed BOLD}. The BOLD reference was then co-registered to the T1w reference using \texttt{bbregister} (FreeSurfer) which implements boundary-based registration \citep{bbr}. Co-registration was configured with six degrees of freedom. Several confounding time-series were calculated based on the \emph{preprocessed BOLD}: framewise displacement (FD) and three region-wise global signals. FD was computed following Power (absolute sum of relative motions, \citet{power_fd_dvars}) calculated using the implementation in \texttt{Nipype} \citep[following the definitions by][]{power_fd_dvars}. The three global signals were extracted within the CSF, the WM, and the whole-brain masks. The BOLD time-series were resampled into standard space, generating a \emph{preprocessed BOLD run in MNI152NLin2009cAsym space}. First, a reference volume and its skull-stripped version were generated using a custom methodology of \texttt{fMRIPrep}. The BOLD time-series were resampled onto the following surfaces (FreeSurfer reconstruction nomenclature): \texttt{fsaverage}. \texttt{Grayordinates} files \citep{hcppipelines} containing 91k samples were also generated using the highest-resolution \texttt{fsaverage} as intermediate standardized surface space. All resamplings can be performed with \emph{a single interpolation step} by composing all the pertinent transformations (i.e.~head-motion transform matrices and co-registrations to anatomical and output spaces). Gridded (volumetric) resamplings were performed using \texttt{antsApplyTransforms} (ANTs), configured with Lanczos interpolation to minimize the smoothing effects of other kernels \citep{lanczos}. Non-gridded (surface) resamplings were performed using \texttt{mri\_vol2surf} (FreeSurfer).

Many internal operations of \texttt{fMRIPrep} use \texttt{Nilearn} 0.8.1 \citep[RRID: SCR\_001362]{nilearn}, mostly within the functional processing workflow. For more details of the pipeline, see \href{https://fmriprep.readthedocs.io/en/latest/workflows.html}{the section corresponding to workflows in \texttt{fMRIPrep}'s documentation}.

\hypertarget{copyright-waiver}{%
\subsubsection{Copyright Waiver}\label{copyright-waiver}}

The above boilerplate text was automatically generated by \texttt{fMRIPrep} with the express intention that users should copy and paste this text into their manuscripts \emph{unchanged}. It is released under the \href{https://creativecommons.org/publicdomain/zero/1.0/}{CC0} license.

\end{document}